\documentclass[runningheads]{llncs}
\usepackage[T1]{fontenc}
\usepackage{graphicx}
\usepackage{cite}
\usepackage{amsmath,amssymb,amsfonts}
\usepackage{algorithmic}
\usepackage{graphicx}
\usepackage{textcomp}
\usepackage{xcolor}
\def\BibTeX{{\rm B\kern-.05em{\sc i\kern-.025em b}\kern-.08em
    T\kern-.1667em\lower.7ex\hbox{E}\kern-.125emX}}

\usepackage{stmaryrd}
\usepackage{centernot}
\usepackage{enumerate}
\usepackage{url}
\usepackage{tikz}
\usepackage{pgf-umlsd}
\usepackage[linktoc=all,pdftex,pdfpagelabels,bookmarks,hyperindex,hyperfigures,hidelinks]{hyperref}

\input xy 
\xyoption{all}

\usepackage{comment}

\graphicspath{{figures/}}
\usepackage{caption}
\usepackage{subcaption}

\usepackage{relsize}
\usepackage{fancyvrb}
\fvset{numbersep=2pt, fontfamily=tt, fontsize=\smaller, frame=single}

\newcommand{\proto}{\caP}
\newcommand{\dlct}{\mathcal{D}}
\newcommand{\lingo}{\Lambda}

\usepackage{marginnote}
\usepackage[normalem]{ulem}
\newcommand{\editornote}[3]{%
    {\color{#3}%
        {\marginnote%
            [\color{#3}#2$\vartriangleright$]{\color{#3}$\vartriangleleft$#2}
        }%
        {#1}%
    }%
}
\newcommand{\victor}[1]{\editornote{#1}{Víctor}{blue}}
\newcommand{\jose}[1]{\editornote{#1}{Jose}{teal}}

\newcommand{\catherine}[1]{\editornote{#1}{Catherine}{red}}

\makeatletter
\newcommand*{\vneq}{%
  \mathrel{%
    \mathpalette\@vneq{=}%
  }%
}
\newcommand*{\@vneq}[2]{%
  \sbox0{\raisebox{\depth}{$#1\neq$}}%
  \sbox2{\raisebox{\depth}{$#1|\m@th$}}%
  \ifdim\ht2>\ht0 %
    \sbox2{\resizebox{\vneqxscale\width}{\vneqyscale\ht0}{\unhbox2}}%
  \fi
  \sbox2{$\m@th#1\vcenter{\copy2}$}%
  \ooalign{%
    \hfil\phantom{\copy2}\hfil\cr
    \hfil$#1#2\m@th$\hfil\cr
    \hfil\copy2\hfil\cr
  }%
}
\newcommand*{\vneqxscale}{1}
\newcommand*{\vneqyscale}{1}
\makeatother

\newcommand{\caP}{{\cal P}}

\begin{document}
\title{A Theory of Composable Lingos \\ for Protocol Dialects}
%
%

\author{V\'{\i}ctor Garc\'{\i}a\inst{1}
\and
Santiago Escobar\inst{1}
\and
Catherine Meadows\inst{2} 
\and
Jose Meseguer\inst{3}
}
\authorrunning{V. García et al.}
%
\institute{
VRAIN, Universitat Polit{\`e}cnica de Val\`{e}ncia, Valencia, Spain\\
\email{\{vicgarval,sescobar\}@upv.es} 
\and
\email{cameadows3306@gmail.com}
\and
University of Illinois at Urbana-Champaign, Illinois, USA\\
\email{meseguer@illinois.edu}
}

\maketitle              
\begin{abstract}
Formal patterns are  formally specified solutions
to frequently occurring distributed system problems that are generic, executable, and come
with strong qualitative and/or quantitative formal guarantees.  A formal pattern is a \emph{generic system transformation}
which transforms a usually infinite class of systems in need of the pattern's solution into enhanced versions
of such systems that solve the problem in question.
 In this paper we demonstrate the application of formal patterns to \emph{protocol dialects}. 
 Dialects are methods for hardening protocols so as to endow them with light-weight security, especially against easy attacks 
 that can lead to more serious ones.  A \emph{lingo} 
is a dialect's key security component, because attackers are unable to ``speak'' the lingo.  A lingo's ``talk'' changes all the time, becoming
a moving target for attackers.  In this paper we present several formal patterns for both lingos and dialects.
Lingo formal patterns can make lingos stronger by both \emph{transforming} them and by \emph{composing} several lingos
into a stronger lingo.  Dialects themselves can be obtained by the application of a single
\emph{dialect formal pattern}, generic on both the chosen lingo and the chosen protocol.

\keywords{Formal Patterns, Lingos, Compositions, Security, Maude}
\end{abstract}
%
%
%
%
%

\section{Introduction}\label{sec:introduction}
Formal patterns \cite{MeseguerSCP14} are methods for formally specifying generic solutions to distributed system problems that are executable and provide formal guarantees. Such patterns allow one to retain what has been learned about solving such problems without tying that knowledge to a particular concrete solution. They also allow for the composition of solutions and  the composition of guarantees.

Formal patterns have been successfully applied to a number of different distributed systems problems \cite{duran2024programming,DBLP:conf/adt/Meseguer22,sun2010formal}.  
They have also been applied to security protocols, and 
have  been helpful in developing methods for hardening protocols against malicious attacks \cite{chadha2008modular,eckhardt2012stable,nigam2022automated}. 
In this paper we apply this methodology to an emerging  protocol hardening technique, the  design of \emph{protocol dialects}.
Protocol dialects are methods for modifying protocols in order to provide a light-weight layer of security, especially against relatively easy attacks that could potentially be leveraged into more serious ones. The scenario is usually that of a network of mutually trusting principals, e.g., an enterprise network, that needs to defend itself from outside attackers. 
The most effective approach is to prevent outside parties from even  initiating a communication with group members, e.g., by requiring messages to be modified in some way unpredictable by the attacker.\footnote{By ``message'' we  mean a unit of information transferred by the protocol, and can mean anything from a packet to a message in the traditional sense. }

Protocol dialects are intended to be used as a first line of defense, not as a replacement for traditional authentication protocols.  Indeed, they are primarily concerned with protecting communication between members of an enclave in which all parties trust each other against weak attackers that are trying to leverage off protocol vulnerabilities.  We can make use of this restricted scope to use an attacker model based on the on-path attacker model \cite{gogineniverify,ProtocolDialects-FormalPatterns}:  First, the attacker can read messages, but cannot destroy or replace them.  Second, principals  trust each other and share a common secret that they can use, for example, as the seed  for  a pseudo-random number generator to generate and share common  secret parameters.  
Because of the simplified threat model and trust assumption,
dialects can be designed to be simpler than full-scale authentication protocols.  Thus they can be expected to be prone to fewer implementation and configuration errors. 
With this in mind, dialects can be used as a separate but very thin layer in the protocol stack, useful as a defense against attacks in case the main authentication protocol is misconfigured or implemented incorrectly.   Considering the dialect as a separate layer makes it easier to keep it simple, because it makes it possible to avoid dependencies on the layers it is communicating with.  In our work, we consider dialects that sit right above the transport layer in the TCP/IP model, while the protocol it is modifying sits just above the dialect in the application layer.  However, other locations are possible for the protocol being modified. Variations in the threat model are also possible, e.g., deciding whether or not to include replay or session hijacking attacks, which can both be implemented by an on-path attacker.

Protocol dialects as formal patterns were first introduced in \cite{ProtocolDialects-FormalPatterns}. A dialect is generic on two parameters: the given \emph{lingo}, which describes the actual \emph{message transformation}, and the 
underlying \emph{protocol} whose security it improves. The
\emph{dialect} pattern is a \emph{protocol transformation} yielding a new protocol responsible for applying the lingo and interfacing with the underlying protocol. Lingos can be transformed and composed into stronger ones in
a compositional way.  Using lingo compositions a given protocol can be made
more secure in a wide variety of ways by means of the different dialects used in such lingo
compositions.

We substantially extend the work in \cite{ProtocolDialects-FormalPatterns} by concentrating on the design of lingos, instead of the design of dialects as a whole,
and developing \emph{new formal patterns} to create and transform new lingos from simpler ones.  This is a more flexible and computationally more efficient approach that 
increases the security of  transformed lingos (and therefore of dialects) and provides a
simpler methodology for designing new lingos and dialects out of previous ones and increasing their security.  In particular, we are able to \emph{replace} two important
formal patterns for dialects in \cite{ProtocolDialects-FormalPatterns} by much simpler similar formal patterns for lingos.
 Lingos rely on constantly changing secret \emph{parameters}, which are generated by a  pseudorandom number generator shared throughout, itself parametric on a single shared secret, an approach to lingo design introduced by  Gogeneni et al. in \cite{gogineniverify}.

 From the software engineering point of view, the importance of formal patterns ---in particular that of lingos
 and dialects--- is that they make possible the design and verification of distributed systems
 of much higher quality than those developed in a traditional way, and they endow such systems with desired new features
 ---for dialects with new lightweight security features--- in a fully modular way.  As explained in \cite{DBLP:conf/fase/Meseguer25},
 formal patterns
 provide a modular, generic way of developing formal executable specifications of system \emph{designs} that can be subjected to
systematic formal verification \emph{before} they are implemented and support the correct by construction development of high quality systems
with a notable economy of effort and a high degree of reusability.  Mathematically, the \emph{dialect formal pattern}
is a \emph{theory transformation} definable as a partial function:
\[
\mathcal{D}: \mathbf{PLingos} \times \mathbf{Protocols} \ni (\Lambda_{p},\mathcal{P}) \mapsto \mathcal{D}_{\Lambda_{p}}(\mathcal{P}) \in \mathbf{Protocols} 
\]
 where $\mathbf{PLingos}$ and $\mathbf{Protocols}$ denote classes of executable specifications, i.e., of executable
 \emph{theories}, respectively specifying the classes of parametric lingos and communication protocols.  Likewise,
 as we explain in the paper, \emph{lingo formal
 patterns} are similar theory transformations mapping one or more lingos to a new lingo enjoying specific
 formal properties.

\subsection{Related Work}\label{subsec:related-work}
Probably, the first work on dialects is that of Sjoholmsierchio et al. in \cite{sjoholmsierchio_software-defined_2019}.  In this work, the dialect is proposed as a variation of an open-source protocol to introduce new security measures or remove  unused code, and is applied to the OpenFlow protocol, introducing, among other things a defense against cipher-suite downgrade attacks on TLS 1.2.  This variation was independent of TLS, and was achieved by adding proxies, thus allowing for modification without touching the rest of the system and foreseeing the use of dialects as thin layers.  In later work by Lukaszewski and Xie \cite{lukaszewski2022towards} a layer 4.5 in the TCP/IP model was proposed for dialects.  In our own model we also make use of proxies for both sending and receiving parties, but stop short of proposing an additional official layer.

The threat model we use in this paper, and the one used in \cite{ProtocolDialects-FormalPatterns},
is similar to that used by Goginieni et al. in \cite{gogineniverify} and Ren et. al \cite{ren2023breaking}: an on-path attacker who is not able to corrupt any members of the enclave, as discussed earlier. 
We have also followed the suggestion in \cite{ren2023breaking} to view dialects as composable protocol transformations.  In this paper, we simplify this approach by reducing dialect transformations to more basic lingo transformations, which are generic, compositional methods for producing new lingos (and thus new dialects) from old ones.

In  \cite{garcia-alfaro_mpd_2021,gogineni_framework_2022,gogineniverify} Mei, Gogineni, et al. introduce an approach to dialects in which the message transformation is updated each time using a shared pseudo-random number generator.  This means that the security of a transformation depends as much, and possibly more, on its unpredictability as it does on the inability of the attacker to reproduce a particular  instance of a transformation.  We have adopted and extended this approach.  In particular, we model a transformation as a parametrized function, or \emph{lingo}, that produces a new message using the original message and the parameter as input.  Both the lingo and the parameter can be chosen pseudo-randomly, as described in  \cite{ProtocolDialects-FormalPatterns}. 

None of the work cited above (except \cite{ProtocolDialects-FormalPatterns})
applies formal design and evaluation techniques.   However, in \cite{talcott2024dialects} Talcott applies formal techniques to the study of dialects running over unreliable transport, such as UDP.   \cite{talcott2024dialects}  is complementary to our work in a number of ways.  First, we have been looking at dialects running over reliable transport, such as TCP.  Secondly, we concentrate on a particular attacker model, the on-path attacker with no ability to compromise keys or corrupt participants, and then explore ways of generating dialects and lingos that can be used to defend against this type of attacker.  In \cite{talcott2024dialects}, however, the dialects are relatively simple, but the attacker models are explored in more detail,  giving the attacker a set of specific actions that it can perform.   Like our model, the work in \cite{talcott2024dialects} is formalized in the Maude formal specification language, providing a clear potential synergy between the two models in the future.

 \subsection{Contributions of this paper}
 \begin{enumerate}
 \item We present a theory of lingos that  adds important new \emph{properties}, \emph{transformations} and \emph{composition operations} to
 the notion of lingo introduced in \cite{ProtocolDialects-FormalPatterns}.  The main goal of this theory is to improve the security of protocol dialects.  
 \item We show that these new lingo transformations and composition operations are \emph{formal patterns}  \cite{DBLP:conf/adt/Meseguer22} that guarantee
 desired properties and replace and greatly simplify 
 two earlier formal patterns for dialects in \cite{ProtocolDialects-FormalPatterns}.  
 These lingo transformations provide new compositional techniques for creating new  lingos and dialects.  
 \item We provide formal executable specifications in Maude \cite{maude-book} for various lingos, their transformations and compositions, and dialects, in Appendix \ref{app:examples}.  
 \end{enumerate}

 Section~\ref{sec:lingos} substantially extends lingos beyond \cite{ProtocolDialects-FormalPatterns}
 and provides a wealth of new results about lingos and lingo transformations.
 Section~\ref{sec:compositions} provides new lingo composition operations and proves their properties.
 Section~\ref{sec:dialects} recalls the dialect notion and illustrates it with various examples.
 Section~\ref{sec:conclusion} concludes the paper.

\section{Lingos}\label{sec:lingos}
A lingo is a data transformation $f$ between two data types
$D_{1}$ and $D_{2}$ with a one-sided inverse transformation $g$.  
The transformation $f$ is \emph{parametric} on a 
secret parameter value $a$ belonging to a parameter set $A$.  
For each parameter $a$, data from $D_{1}$ is transformed into data of $D_{2}$, which, using the same parameter $a$, can be transformed 
back into the original data from $D_{1}$ using the one-sided inverse $g$.
In all our applications,
lingos will be used to transform the payload of a message of sort $D_1$
appearing in some protocol $\mathcal{P}$
whereas $D_{2}$ will then be the data type of transformed payloads.
Such transformed payloads can then be sent, either as 
a single message or as a sequence of messages, to make
it hard for malicious attackers to interfere with 
the communication of honest participants, who are the
only ones knowing the current parameter $a \in A$.
The point of a lingo is that when such 
transformed messages are received by an honest
participant they can be transformed back using $g$ (if they were broken
into several packets they should first be reassembled) to get the original payload in $D_{1}$.

 Briefly, the set $D_1$ can be thought of as analogous to the plaintext space, the set $D_2$ as analogous to the ciphertext space in cryptographic systems, and the parameter set $A$ (also known as the secret parameter set) as analogous to the key space.
The elements of $A$ are generally required to be pseudorandomly generated from a secret shared by enclave members, so it can't be guessed in advance, and it is also generally updated with each use, so information gleaned from seeing one dialected message should not provide any help in breaking another.  However, it is not necessarily required that an attacker is not be able to guess $a$ after seeing $f(d_1,a)$.  The only requirement is that  it is not feasible for the attacker  to compute $f(d_1,a)$ without having been told that $a$ is the current parameter for
$f(d_1,a)$.  This is for two reasons.  First, the lingo is intended to provide authentication, not secrecy.  Secondly, the lingo is only intended to be secure against an on-path attacker \cite{ProtocolDialects-FormalPatterns} that can eavesdrop on traffic but not interfere with it. Thus, if an enclave member sends a message $f(d_1,a)$, the attacker, even if it learns $a$, can't remove $f(d_1,a)$ from the channel and replace it with $f(d'_1,a)$ where $d'_1$ is a message of the attacker's own choosing. It can send $f(d'_1,a)$ after $f(d_1,a)$ is sent, but $f(d'_1,a)$ may not be accepted if the secret parameter is changed each time the message is sent. The enforcement of the policy on the sharing and updating of parameters is the job of the dialect, which will be discussed in Section~\ref{sec:dialects}.


\begin{definition}[Lingo]\label{def:lingo}
A \emph{lingo} $\Lambda$ is a 5-tuple  $\Lambda = (D_1,D_2,A,f,g)$, where $D_1$, $D_2$ and $A$ are sets called, respectively, the \emph{input}, \emph{output}
and \emph{parameter} sets,
$f$, $g$ are functions $f : D_1 \times A \rightarrow D_2$,  $g : D_2 \times A \rightarrow D_1$ such that\footnote{The equality $g(f(d_1,a),a)=d_1$
can be generalized to 
an equivalence $g(f(d,a),a) \equiv d$. The generalization of lingos to allow for such message equivalence is left for future work.} 
$\forall d_1 \in D_1$, $\forall a \in A$, $g(f(d_1,a),a)=d_1$. 
We call a lingo \emph{non-trivial} iff $D_1$, $D_2$ and $A$
are non-empty sets. In what follows, all lingos considered will be
non-trivial.

Note that a lingo $\Lambda = (D_1,D_2,A,f,g)$ is just a three-sorted
$(\Sigma,E)$-algebra, with sorts\footnote{Note the slight abuse of notation:
in $\Sigma$, $D_{1}$, $D_2$ and $A$ are uninterpreted sort
\emph{names}, whereas in a given lingo
$\Lambda' = (D'_1,D'_2,A',f',g')$, such sorts are respectively interpreted
as \emph{sets} $D'_{1}$, $D'_2$ and $A'$ and, likewise, the uninterpreted
function \emph{symbols} $f$ and $g$ in $\Sigma$ are
interpreted as actual \emph{functions} $f'$ and $g'$ in a given lingo.}
$D_{1}$, $D_2$ and $A$, 
function symbols  $f : D_1 \times A \rightarrow D_2$ and $g : D_2 \times A \rightarrow D_1$, and $E$ the single $\Sigma$-equation $g(f(d_1,a),a)=d_1$.
%

\end{definition}

Note the asymmetry between $f$ and $g$, in the lingo
definition, since we
do not have an equation of the form 
$\forall d_2 \in D_2$, $\forall a \in A$, $f(g(d_2,a),a)=d_2$. 
The reason for this asymmetry is that, given $a \in A$,
the set $\{f(d_{1},a) \mid d_1 \in D_1 \}$ may be
a \emph{proper} subset of $D_{2}$.  However, we show
in Corollary \ref{cor:corollary2} below that
the equation $f(g(d_2,a),a)=d_2$ does hold for any
$d_2 \in \{f(d_{1},a) \mid d_1 \in D_1 \}$.
Other results and proofs are included in Appendix~\ref{app:proofs}.
Of course, for the special case of
a lingo such that  $\forall a \in A$,
$D_2 = \{f(d_{1},a) \mid d_1 \in D_1 \}$,
the equation 
$\forall d_2 \in D_2$, $\forall a \in A$, $f(g(d_2,a),a)=d_2$
will indeed hold. 
In particular, the set equality
$D_2 = \{f(d_{1},a) \mid d_1 \in D_1 \}$
holds for all $a \in A$ in the following example.

\begin{example}[The XOR$\{n\}$ Lingo] \label{ex:xor-lingo} 
Let $\Lambda_{\mathit{xor}}\{n\}=(\{0,1\}^{n},\{0,1\}^{n},\{0,1\}^{n},\oplus,\oplus)$, with $\_\oplus\_ : \{0,1\}^{n} \times \{0,1\}^{n} \rightarrow \{0,1\}^{n}$ the bitwise exclusive or operation. 

Note that $\Lambda_{\mathit{xor}}\{n\}$ is \emph{parametric} on $n \in \mathbb{N}\setminus \{0\}$.  That is, for each choice of $n \geq 1$ we get a corresponding lingo. This is a common phenomenon: for many lingos, $D_1, D_2$ and $A$ are not fixed sets, but \emph{parametrized data types}, so that for each choice of the parameter we get a corresponding instance lingo. For a more detailed explanation of \emph{parametrized data types}, see Appendix~\ref{app:param}.
\end{example}

\begin{example}[XOR-BSEQ Lingo for Bit-sequences] \label{ex:XOR-BSeq-Lingo}
Yet another variation on the same theme is to
define the $\Lambda_{xor.\mathit{BSeq}}$
lingo, whose elements are
\emph{bit sequences} of arbitrary length.  
\end{example}

Note that 
$\{0,1\}^{n}$ 
is not the only possible parametrized data type on which a
lingo based on the $\mathit{xor}$ operation could be based.  A different
parametrized data type can have as elements the \emph{finite subsets} of
a parameter set $D$, and where the $\mathit{xor}$
operation is interpreted as the \emph{symmetric difference}
of two finite subsets of $D$.  A finite subset, say, $\{a,b,c\} \subseteq D$,
with $a,b,c \in D$ different elements, can then be represented
as the expression $a \; \mathit{xor} \; b \; \mathit{xor} \; c$.
%
%
For $D$ a finite set, there is an isomorphism
between its power set $\mathcal{P}(D)$ with
symmetric difference, and the function set
$[D \rightarrow \{0,1\}]$ with pointwise $\mathit{xor}$
of predicates $p,q \in [D \rightarrow \{0,1\}]$, i.e., 
$p \; \mathit{xor} \; q = \lambda d \in D.\; p(d) \;\mathit{xor} \; q(d)$,
which when $|D|=n$
is isomorphic to $\{0,1\}^{n}$ with pointwise $\mathit{xor}$.

\begin{example}[Divide and Check (D\&C) Lingo] \label{ex:d&c-lingo-old}\label{ex:d&c-lingo}
Let $\Lambda_{D\&C} = (\mathbb{N},\mathbb{N} \times \mathbb{N},\mathbb{N},
f, g)$, given $\forall n, a, x, y \in \mathbb{N}$ then:
\begin{itemize}
 \item $f(n,a) = (quot(n +(a+2),a +2),rem(n + (a+2),a+2))$
\item $g((x,y),a) = (x \cdot (a+2)) + y -(a+2)$
\end{itemize}
where $\mathit{quot}$ and $\mathit{rem}$ denote the quotient
and remainder functions on naturals.
\end{example}

\noindent The idea of the $\Lambda_{D\&C}$ lingo is quite simple.  Given a parameter 
$a \in \mathbb{N}$,
an input number $n$ is transformed into a
pair of numbers $(x,y)$: the quotient $x$ of $n+a+2$ by $a+2$ (this makes sure that
$a+2 \geq 2$), and the remainder $y$ of $n+a+2$ by $a+2$.  The meaning of
``divide'' is obvious.  The meaning of ``check''
reflects the fact that a receiver of a pair $(x,y)$ who knows $a$
can check $x = quot((x \cdot (a+2))+y,a+2))$
and $y = rem((x \cdot (a+2))+y,a+2))$, giving some assurance
that the pair $(x,y)$ was obtained from
$n = (x \cdot (a+2)) + y -(a+2)$ and
has not been tampered with.  This
is an example of an $f$-\emph{checkable} lingo (see Section~\ref{f-check-subsection} below), whereas no such check is possible for any of the
isomorphic versions of the $\Lambda_{\mathit{xor}}$ lingo
(see again Section~\ref{f-check-subsection} below). 

For a detailed explanation with Maude formal executable specifications of Examples \ref{ex:xor-lingo}, \ref{ex:XOR-BSeq-Lingo} and \ref{ex:d&c-lingo}, we refer the reader to Appendix~\ref{app:ex-lingos}.

\subsection{f-Checkable Lingos and the $\Lambda \mapsto \Lambda^{\sharp}$ Lingo Transformation} \label{f-check-subsection}


In this section we discuss lingos where the validity of the generated payloads can be checked by the dialect. We define a lingo transformation that 
substantially reduces the probability of an attacker successfully forging a message.



\begin{definition}[f-Checkable Lingo]  A Lingo $\Lambda=(D_1,D_2,A,f,g)$ is called
$f$-\emph{checkable} iff $\forall a \in A$, $\exists d_2 \in D_2$
s.t. $\nexists \; d_1 \in D_1$ s.t. $d_2 = f(d_1,a)$.
\end{definition}

Above, $\Lambda$ is understood, not as a fixed lingo, but as an equational theory specifying 
lingos.  Therefore, the ``f'' in ``$f$-Checkable Lingo'' is
an uninterpreted function symbol.  However,
once an actual interpretation for the theory $\Lambda$
is given, and therefore for $D_1$, $D_2$, $A$, $f$ and $g$,
$f$ becomes interpreted as the  concrete function specified by that interpretation. 

Note that the $\Lambda_{\mathit{xor}}$  lingo of Example~\ref{ex:xor-lingo} is not $f$-checkable.
We call a lingo $\Lambda=(D_1,D_2,A,f,g)$
\emph{symmetric} iff $(D_2,D_1,A,g,f)$ is also a lingo.
Obviously, $\Lambda_{\mathit{xor}}$ is symmetric.  Any
symmetric lingo is \emph{not} $f$-checkable, since 
any $d_2 \in D_2$
satisfies the equation $f(g(d_2,a),a) = d_{2}$.

\begin{example} \label{ex:f-Ckeckable-lingo} The $\Lambda_{D \& C}$ lingo of Example~\ref{ex:d&c-lingo} is $f$-checkable.  Indeed,
for each $a \in \mathbb{N}$, any $(x,y)$ with $y \geq a +2$
cannot be of the form $(x,y) = f(n,a)$ for any $n \in \mathbb{N}$.
This is a cheap, easy check.  The actual check that a $d_1$ exists,
namely, the check $(x,y) = f(g((x,y),a),a)$ provided by Corollary \ref{cor:corollary3} in Appendix \ref{app:proofs} (where \emph{proofs of all theorems are given}),
was already explained when making explicit the meaning of "C" in 
$\Lambda_{D \& C}$.
\end{example}




The following theorem explains how to build an $f$-checkable lingo.
The proof is in Appendix~\ref{app:proofs}.

\begin{theorem}\label{theorem:trasnform-fCheckable}
    Let $\Lambda=(D_1,D_2,A,f,g)$ be a lingo, where
    we assume that $D_1$, $D_2$ and $A$ are computable data-types and $|D_{1}| \geq 2$.
    %
    %
    Then, $\Lambda^{\sharp}=(D_1,{D_2\times D_2},\linebreak {A \otimes A},f^{\sharp},g^{\sharp})$ is an
    $f$-checkable lingo, where:
    \begin{itemize}
        \item $A \otimes A = A \times A \setminus id_A$, with $id_A = \{(a,a) \in A^2 | a \in A\}$
        \item $f^{\sharp}(d_1, (a,a')) = (f(d_1,a), f(d_1,a'))$
        \item $g^{\sharp}((d_2,d'_2), (a,a')) = g(d_2,a)$.
        
    \end{itemize}
\end{theorem}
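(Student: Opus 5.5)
The plan is to check the two required properties directly from Definition~\ref{def:lingo} and the definition of $f$-checkability. The key tool is the observation that, for each fixed $a \in A$, the map $f(\_,a) : D_1 \to D_2$ is injective. This follows at once from the lingo equation: if $f(d_1,a) = f(e_1,a)$, then $d_1 = g(f(d_1,a),a) = g(f(e_1,a),a) = e_1$.

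\emph{Step 1: $\Lambda^{\sharp}$ is a lingo.} The typing is immediate.
\begin{itemize}
\item $f^{\sharp}$ maps $D_1 \times (A \otimes A)$ into $D_2 \times D_2$.
\item $g^{\sharp}$ maps $(D_2 \times D_2) \times (A \otimes A)$ into $D_1$.
\end{itemize}
For the equation, take any $d_1 \in D_1$ and $(a,a') \in A \otimes A$. Then
\[
g^{\sharp}(f^{\sharp}(d_1,(a,a')),(a,a')) = g^{\sharp}((f(d_1,a),f(d_1,a')),(a,a')) = g(f(d_1,a),a) = d_1 ,
\]
using only the equation of $\Lambda$. The computability hypothesis on $D_1$, $D_2$, $A$ is used to see that $A \otimes A$ is again a computable data type, since equality on $A$ is decidable and so membership in $id_A$ is decidable. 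It also shows that $f^{\sharp}$ and $g^{\sharp}$ are computable whenever $f$ and $g$ are.

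\emph{Step 2: $f$-checkability.} Fix $(a,a') \in A \otimes A$. Since $|D_1| \geq 2$, choose $d_1 \neq d'_1$ in $D_1$ and set
\[
(d_2,d'_2) = (f(d_1,a),\, f(d'_1,a')).
\]
Suppose, for contradiction, that $(d_2,d'_2) = f^{\sharp}(e,(a,a'))$ for some $e \in D_1$. Then $f(e,a) = f(d_1,a)$ and $f(e,a') = f(d'_1,a')$. By injectivity of $f(\_,a)$ and of $f(\_,a')$, this gives $e = d_1$ and $e = d'_1$, contradicting $d_1 \neq d'_1$. Hence $(d_2,d'_2)$ lies outside the image of $f^{\sharp}(\_,(a,a'))$, which is exactly what $f$-checkability requires.

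Note that this argument never uses $a \neq a'$. The role of removing the diagonal $id_A$ is to guarantee that the two components are computed with distinct secrets, which is a security rather than a correctness concern.

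\emph{Main obstacle: non-triviality.} The delicate point is that every lingo must be non-trivial, so $A \otimes A$ must be non-empty. This holds if and only if $|A| \geq 2$, which is not literally among the stated hypotheses. I would handle it by observing that $D_2 \times D_2$ is non-empty because $D_2$ is. For $A \otimes A$, I would state the needed assumption $|A| \geq 2$ explicitly and argue it is implicit in the intended use, where $A$ is a pseudo-random parameter space. If $|A| = 1$, the checkability condition holds vacuously, but $\Lambda^{\sharp}$ would be a trivial lingo, so the theorem should be read with $|A| \geq 2$.
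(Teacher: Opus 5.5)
Your proof is correct and follows the paper's argument: the same one-line check that $g^{\sharp}(f^{\sharp}(d_1,(a,a')),(a,a')) = d_1$, and the same witness $(f(d_1,a),f(d'_1,a'))$ with $d_1 \neq d'_1$, excluded from the image by injectivity of $f(\_,a)$ (the paper's Lemma~\ref{lem:lemma1}). Your side remark that non-triviality of $\Lambda^{\sharp}$ also needs $|A| \geq 2$, so that $A \otimes A \neq \emptyset$, is accurate; the paper's statement and proof leave this implicit.
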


\begin{example}[$f$-checkable transformed version of 
$\Lambda_{\mathit{xor}}\{n\}$] \label{ex:edxor}
The lingo transformation $\Lambda \mapsto \Lambda^{\sharp}$
maps $\Lambda_{\mathit{xor}}\{n\}$ in Example \ref{ex:xor-lingo}
to the $f$-checkable lingo
$\Lambda_{\mathit{xor}}^{\sharp}\{n\}=(\{0,1\}^{n},\{0,1\}^{n}
\times \{0,1\}^{n},\{0,1\}^{n} \otimes \{0,1\}^{n},\oplus^{\sharp},
\oplus^{\sharp})$.  
\end{example}

\vspace{2ex}

%

\subsection{Malleable Lingos}\label{subsubsec:comp-malleable-lingos}

\noindent A cryptographic function is called \emph{malleable} 
if an intruder, \emph{without knowing the secret key},
can use one or more existing encrypted messages to generate another, related,
message also encrypted with the same secret key.
In a similar way, if a lingo $\Lambda$ is what below we call
\emph{malleable}, an intruder can disrupt the communication
between an honest sender \emph{Alice} and an honest receiver 
\emph{Bob}
in a protocol $\mathcal{P}$ whose messages are modified
by means of a
lingo\footnote{I.e., honest participants that use a 
\emph{dialect} $\mathcal{D}_{\Lambda}(\mathcal{P})$
based on $\mathcal{P}$ and
$\Lambda$ to modify their messages
(see \S \ref{sec:dialects}).} $\Lambda$ by producing
a message \emph{compliant} with
a secret parameter $a \in A$ of
$\Lambda$ 
supposedly sent from \emph{Alice} to 
\emph{Bob} (who share the secret parameter $a$),
 but actually sent by the intruder.

\begin{definition}[Malleable Lingo] \label{def:mall} Let $\Lambda=(D_1,D_2,A,f,g)$ 
be a lingo,
which has been formally specified as the initial algebra of
an equational theory $(\Sigma_{\Lambda},E_{\Lambda})$.
$\Lambda$ is called
\emph{malleable} iff there exists a \emph{recipe}, i.e., a $\Sigma_{\Lambda}$-term
$t(x,y)$ of sort $D_2$
with free variables $x,y$ of respective sorts $D_2$ and $A$, as well as
 a subset $A_{0} \subseteq A$
 such that
 $\forall d_1 \in D_1, \; \forall a \in A, \; \forall a' \in A_{0}$,
\begin{enumerate}
    \item $f(d_1,a) \not= t(f(d_1,a),a')$, where, by convention,
  $t(f(d_1,a),a')$  abbreviates the substitution instance
  $t\{x \mapsto f(d_1,a),y \mapsto a'\}$

  \item $\exists d'_1 \in D_1$ such that $t(f(d_1,a),a') = f(d'_1,a)$.
\end{enumerate}
%
\end{definition}

\begin{example} \label{ex:xor-mall}
The lingo $\Lambda_{xor}\{n\}$ from
Example \ref{ex:xor-lingo} is \textit{malleable}.  The recipe $t(x,y)$
 is $x \oplus y$,  and $A_{0}= \{0,1\}^{n} \setminus \{\overrightarrow{0}\}$.  Then, for any $a' \in A_{0}$
 we have,
 $t(f(d_1,a),a')=d_{1} \oplus a \oplus a' \not = d_{1} \oplus a $,
 which meets condition $(1)$, since $a' \not= \overrightarrow{0}$ and $\oplus$ is a group addition;
 and it meets condition $(2)$, since
 $t(f(d_1,a),a')=f(d_1 \oplus a',a)$. An attacker needs not know $d_1 \oplus a'$.
\end{example}

\noindent 
The notion of an $f$-checkable lingo of Section~\ref{f-check-subsection}
provides a first line of defense against an intruder trying to
generate a payload compliant with a secret parameter $a$.
However, the example below shows that an $f$-checkable lingo may be malleable.

\begin{example} \label{ex:xor-sharp-mall}
The lingo $\Lambda_{\mathit{xor}}^{\sharp}\{n\}$
from Example \ref{ex:edxor} with $n \geq 2$
is malleable with recipe
$t((x_1 , x_2),(y_1 , y_2))= (x_1 \oplus y_1 , x_2\oplus y_1)$
and $A_{0} = \{(y_1 , y_2) \in \{0,1\}^{n} \otimes \{0,1\}^{n} \mid 
y_1 \not= \overrightarrow{0}\}$.
Indeed, for any $d_1 \in \{0,1\}^{n}$, $(a_1 , a_2) \in  \{0,1\}^{n} \otimes \{0,1\}^{n}$, and $(a'_1 , a'_2) \in A_{0}$, we have:
\begin{enumerate}
\item 
$t(f^{\sharp}(d_1,(a_1,a_2)),(a'_1,a'_2)))=
(d_1 \oplus a_1 \oplus a'_1, d_1 \oplus a_2 \oplus a'_1)$
so that, by $a'_1 \not= \overrightarrow{0}$, we have
$f^{\sharp}(d_1,(a_1,a_2)) \not= t(f^{\sharp}(d_1,(a_1,a_2)),(a'_1,a'_2)))$.
 \item From $t(f^{\sharp}(d_1,(a_1,a_2)),(a'_1,a'_2)))=
(d_1 \oplus a_1 \oplus a'_1, d_1  \oplus a_2 \oplus a'_1) $
it trivially follows that there is a $d'_1$, namely,
 $d'_1 = d_1 \oplus a'_1$, such that
$f^{\sharp}(d'_1,(a_1,a_2))= 
(d_1 \oplus a'_1 \oplus a_1,
d_1 \oplus a'_1 \oplus a_2)=
t(f^{\sharp}(d_1,(a_1,a_2)),(a'_1,a'_2)))$.
\end{enumerate}
Therefore, $\Lambda_{\mathit{xor}}^{\sharp}\{n\}$ with $n \geq 2$ is malleable.
An attacker needs not know $d'_1$, only needs to use
$f^{\sharp}(d_1,(a_1,a_2))$, choose $(a,a') \in A_{0}$, and then use $t$.
\end{example}

\noindent Examples \ref{ex:xor-mall}--\ref{ex:xor-sharp-mall}
suggest the following generalization
of Theorem~\ref{theorem:trasnform-fCheckable} above.
The proof is in Appendix~\ref{app:proofs}.


\begin{theorem}\label{theorem:malleable} 
Let $\Lambda=(D_1,D_2,A,f,g)$ 
be a lingo which has been formally specified as the initial algebra of
an equational theory $(\Sigma_{\Lambda},E_{\Lambda})$ and is
such that: 
(i) $D_{2} \subseteq  D_{1}$, (ii)
there is a 
subset $A_{0} \subseteq A$ with at least two different elements
$a_0 , a_1$ and
such that $\forall d_1 \in D_1$,
$\forall a' \in A_{0}$, $f(d_1,a') \not= d_1$, and
(iii) $\forall \; d_1 \in D_1$, $\forall a \in A$, 
$\forall a' \in A_{0}$, 
$f(f(d_1,a),a')=f(f(d_1,a'),a)$.  Then both $\Lambda$
 and $\Lambda^{\sharp}$   are malleable.
\end{theorem}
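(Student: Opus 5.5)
The plan is to mimic Examples~\ref{ex:xor-mall} and~\ref{ex:xor-sharp-mall}, using $f$ itself as the recipe. For $\Lambda$, take $t(x,y) = f(x,y)$. This is a $\Sigma_\Lambda$-term of sort $D_2$ with $x$ of sort $D_2$ and $y$ of sort $A$. It is well-sorted because (i) gives $D_2 \subseteq D_1$, so $f$ can be applied to an element of $D_2$; in the signature this amounts to a subsort declaration $D_2 < D_1$. Take the subset $A_0$ given by (ii).

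For condition (1) of Definition~\ref{def:mall}, fix $d_1$, $a$ and $a' \in A_0$. Put $x = f(d_1,a) \in D_2 \subseteq D_1$. By (ii) applied to $x$, we get $t(x,a') = f(x,a') \neq x$. For condition (2), hypothesis (iii) gives $t(f(d_1,a),a') = f(f(d_1,a),a') = f(f(d_1,a'),a)$. So the witness $d'_1 = f(d_1,a') \in D_2 \subseteq D_1$ works. The attacker never needs to know $d'_1$, exactly as in Example~\ref{ex:xor-mall}.

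For $\Lambda^\sharp$, first note that $\Lambda^\sharp$ is a lingo by Theorem~\ref{theorem:trasnform-fCheckable}. (If the $f$-checkability part of that theorem needs $|D_1|\ge 2$, we only use the lingo part.) Its parameter set $A\otimes A$ is nonempty because $a_0 \neq a_1$, so $(a_0,a_1) \in A \otimes A$. Following Example~\ref{ex:xor-sharp-mall}, take the recipe
\[
t^\sharp((x_1,x_2),(y_1,y_2)) = (f(x_1,y_1), f(x_2,y_1)),
\]
built from the pairing constructor, the projections and $f$. Take
\[
A_0^\sharp = \{(y_1,y_2) \in A\otimes A \mid y_1 \in A_0\},
\]
which is nonempty since it contains $(a_0,a_1)$.

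Now fix $d_1$, $(a,b) \in A\otimes A$ and $(a',b') \in A_0^\sharp$. We have $f^\sharp(d_1,(a,b)) = (f(d_1,a), f(d_1,b))$, so
\[
t^\sharp(f^\sharp(d_1,(a,b)),(a',b')) = (f(f(d_1,a),a'), f(f(d_1,b),a')).
\]
For condition (1), the first components already differ: by (ii) applied to $f(d_1,a)$, we have $f(f(d_1,a),a') \neq f(d_1,a)$. For condition (2), apply (iii) to each component. This rewrites the pair as $(f(f(d_1,a'),a), f(f(d_1,a'),b))$, which equals $f^\sharp(d'_1,(a,b))$ with $d'_1 = f(d_1,a')$.

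I expect the main obstacle to be the formal bookkeeping rather than any mathematics. We must check that the recipes are legitimate $\Sigma$-terms of the right sort: the inclusion $D_2\subseteq D_1$ has to be reflected syntactically as a subsort, and $\Lambda^\sharp$'s signature must contain pairing and projections. We must also use the requirement that $A_0$ has two distinct elements precisely to guarantee $A_0^\sharp \neq \emptyset$. Without that, the malleability of $\Lambda^\sharp$ would hold vacuously, so I will state explicitly that $A_0^\sharp$ is nonempty.
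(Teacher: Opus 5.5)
Your proof is correct and is essentially the paper's argument. You use the same recipes $t(x,y)=f(x,y)$ and $t^\sharp((x_1,x_2),(y_1,y_2))=(f(x_1,y_1),f(x_2,y_1))$, the same parameter subsets $A_0$ and $\{(y_1,y_2)\in A\otimes A \mid y_1\in A_0\}$, and the same witness $d'_1=f(d_1,a')$ via (ii) and (iii), while also spelling out details the paper leaves implicit: that $d'_1\in D_1$ by (i), that $a_0\neq a_1$ makes $A_0^\sharp$ nonempty, and the subsort/projection bookkeeping needed for the recipes to be well-formed terms.
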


\begin{remark} \label{rem:mall1} Note that we don't require a specification of  the relation between the two plaintexts $d_1$ and $d'_{1}$ (as is done in the informal definition given at the beginning of this section)
except for both belonging to $D_1$.   This still captures the way in which an attacker would take advantage of a malleable function but is more straightforward to reason about.
\end{remark}

\begin{remark} \label{rem:mall2} Note that the recipes are used with two arguments: the ciphertext $f(d_1,a)$ and a parameter $a’$ 
in  $A_{0}$.  It is also possible to ignore $f(d_1,a)$, i.e., for some lingos the attacker may generate a random cyphertext that, due to malleablity,
can be mistaken for a genuine message by the receiver. More generally, other notions of lingo malleability besides that in
Def. \ref{def:mall} may be studied. 
\end{remark} 

\begin{remark} \label{rem:mall3}
Malleability is in general undesirable, but some threats introduced by it are mitigated by the on-path  attacker model. For example, the on-path attacker cannot substitute a message $f(d'_1,a)$ for the message $f(d_1,a)$ from which it was derived. But there are also other cases, for example if we allow repeated parameters, or encounter a situation in which an attacker can send a random message which a receiver could mistake for some $f(d_1,a)$, as in Remark \ref{rem:mall2}.
\end{remark}
 

%
%
%

\noindent  
Finally, one would 
like to transform
a  possibly malleable lingo $\Lambda$ into a 
non-malleable lingo $\Lambda'$.  Developing general methods for
proving that a lingo is non-malleable is a topic for future research.  
Note, however, that in cryptography non-malleability of a cryptosystem has been shown to be equivalent to authenticated encryption \cite{bellare1998relations,bellare2008authenticated}. This may or may not be the case for protocol dialects, which use a weaker attacker model.  In Appendix \ref{app:authenticating-lingos}, we discuss \emph{authenticating lingos}
and a transformation mapping a lingo to an authenticating one which we conjecture
can be used to build non-malleable lingos out of malleable ones.

\section{Lingo Compositions} \label{sec:compositions}

\noindent Lingos can be composed in various ways to obtain new lingos. Lingo compositions provide modular, automated methods of obtaining new lingos from existing ones.  Such compositions
are often stronger, i.e., harder to compromise by an attacker, than the lingos so composed. We define here two such lingo composition operations, namely, \emph{horizontal} and \emph{functional} composition of lingos.  

\subsection{Horizontal Composition of Lingos}\label{subsubsec:horizontal-composition}

\noindent Given a finite family of lingos $\overrightarrow{\Lambda} = \{\Lambda_{i}\}_{1 \leq i \leq k}$, $k \geq 2$, all sharing the same input data type $D_1$,
their \emph{horizontal composition}, denoted $\bigoplus \overrightarrow{\Lambda}$
is intuitively
their union, i.e., in $\bigoplus \overrightarrow{\Lambda}$ the input data type
$D_1$ remains the same,
$D_2$ is the union of the $D_{2.i}$, $1 \leq i \leq k$,
and $A$ is the disjoint union of the $A_{i}$, $1 \leq i \leq k$.
$\bigoplus \overrightarrow{\Lambda}$ is a more complex lingo than each of its 
component lingos
$\Lambda_{i}$, making it harder to compromise
by an attacker, because it becomes a hydra with $k$ heads:
quite unpredictably,
sometimes behaves like some $\Lambda_{i}$ 
and sometimes like another $\Lambda_{j}$.  Furthermore,
it has a bigger parameter space, since
$|A| = \Sigma_{1 \leq i \leq k} |A_{i}|$.

\begin{definition}[Horizontal Composition]\label{def:hor-comp}
Let $\overrightarrow{\Lambda}$ be a finite family of lingos
$\overrightarrow{\Lambda} = \{\Lambda_{i}\}_{1 \leq i \leq k}$,
$k \geq 2$, all having the same input data type $D_1$, i.e.,
$\Lambda_{i}=(D_{1},D_{2.i},A_{i},f_{i},g_{i})$, $1 \leq i \leq k$.
Let $\overrightarrow{d}_{0}=(d_{0.1},\ldots,d_{0.k})$ be a choice of default
$D_{2.i}$-values, $d_{0.i}\in D_{2.i}$, $1 \leq i \leq k$.
Then, the \emph{horizontal composition} of the lingos
$\overrightarrow{\Lambda}$ with default $D_{2.i}$-values  $\overrightarrow{d}_{0}$
is the lingo:
\[
\bigoplus_{\overrightarrow{d_{0}}} 
\overrightarrow{\Lambda}=(D_{1}, \bigcup_{1 \leq i \leq k} D_{2.i},
\bigcup_{1 \leq i \leq k} A_{i} \times\{i\}
,\oplus \overrightarrow{f},\oplus \overrightarrow{g})
\]
where, for each $d_1 \in D_1$, $d_2 \in \bigcup_{1 \leq i \leq k} D_{2.i}$, and $a_{i} \in A_{i}$, $1 \leq i \leq k$,
\begin{enumerate}
\item $\oplus \overrightarrow{f} = \lambda (d_1,(a_i ,i)).f_i(d_1,a_i)$,
\item $\oplus \overrightarrow{g} = \lambda (d_2 ,(a_i ,i)).\; 
\mathbf{if} \; d_{2} \in D_{2.i} \; \mathbf{then} \; g_{i}(d_i ,a_i) \; \mathbf{else} \; g_{i}(d_{0.i},a_{i})$.
\end{enumerate}
$\bigoplus_{\overrightarrow{d_{0}}} \overrightarrow{\Lambda}$ is indeed a lingo:
$\oplus \overrightarrow{g}(\oplus \overrightarrow{f}(d_1,(a_{i},i)),(a_i , i))=
g_{i}(f_{i}(d_1,a_i ),a_i )= d_1$.
%
\end{definition}

\vspace{2ex}

\noindent We have already pointed out that, in practice, two
protocol participants using a lingo to first modify and then decode a payload $d_1$ that, say, Alice sends to Bob, agree on a secret parameter
$a \in A$ by agreeing on a secret random number $n$,
because both use a function $\mathit{param}: \mathbb{N} \rightarrow A$
to get $a = \mathit{param}(n)$.  This then poses the practical
problem of synthesizing a function $\oplus \overrightarrow{\mathit{param}}: \mathbb{N} \rightarrow \bigcup_{1 \leq i \leq k} A_{i} \times \{i\}$
for $\bigoplus_{\overrightarrow{d_{0}}} \overrightarrow{\Lambda}$
out of the family a functions $\{ \mathit{param}_{i} : \mathbb{N} \rightarrow A_{i}\}_{1 \leq i \leq k}$ used in each of the lingos $\Lambda_{i}$.
Furthermore, different lingos in the family $\overrightarrow{\Lambda}$
may have different degrees of strength against an adversary.
This suggests favoring the choice of stronger lingos over that
of weaker lingos in the family $\overrightarrow{\Lambda}$ to achieve a
function $\oplus \overrightarrow{\mathit{param}}$ that improves the overall strength of 
$\bigoplus_{\overrightarrow{d_{0}}} 
\overrightarrow{\Lambda}$.  This can be achieved by choosing a \emph{bias vector}
$\overrightarrow{\beta}=(\beta_1,\ldots,\beta_k) \in \mathbb{N}_{>0}^{k}$,
so that, say, if lingo $\Lambda_i$ is deemed to be stronger than lingo $\Lambda_j$, then 
the user chooses $\beta_i > \beta_j$.  That is, $\overrightarrow{\beta}$
specifies a \emph{biased} die with $k \geq 2$ faces,
so that the die will show face $j$ with probability $\frac{\beta_{j}}{\Sigma_{1 \leq i \leq k} \beta_{i}}$.  Therefore, we can use a
pseudo-random function $\mathit{throw}_{\overrightarrow{\beta}}: \mathbb{N} \rightarrow \{1,\ldots, k\}$ simulating a sequence of throws of a $k$-face die with bias $\overrightarrow{\beta}$
to get our desired function 
$\oplus \overrightarrow{\mathit{param}}$ as the function:
\[
\oplus \overrightarrow{\mathit{param}}(n) = (\mathit{param}_{\mathit{throw}_{\overrightarrow{\beta}}(n)}(n),\mathit{throw}_{\overrightarrow{\beta}}(n)).
\]

\begin{example} \label{ex:dccomp} To see how horizontal composition can strengthen lingos, consider the D\&C lingo described in Example \ref{ex:d&c-lingo-old}, in which $ f(n,a) = (x,y)$  where $x = quot(n +(a+2),a +2)$ and $y = rem(n + (a+2),a+2)$. We note that any choice of $0 \le y < a + 2$ will pass the $f$-check, so the choice of $y = 0$ or $1$ will always pass the $f$-check. To counter this,   let \emph{reverse}-D\&C be the  lingo with $f(n,a) = (y,x)$, where $y$ and $x$ are computed as in D\&C.  The attacker's best strategy in reverse-D\&C is the opposite of that in D\&C.  Thus composing D\&C  and reverse-D\&C  horizontally with a bias vector $(.5,.5)$ means that the attacker strategy of choosing the first (respectively, second) element of the output to be 0 succeeds with probability 0.5 in any particular instance, as opposed to probability 1 for D\&C by itself.
\end{example}

\noindent 
The following result follows easily from the definition of horizontal composition.

\begin{lemma}
If each $\Lambda_i$ in $\overrightarrow{\Lambda} = \{\Lambda_{i}\}_{1 \leq i \leq k}$,
$k \geq 2$, is $f$-checkable, then $\bigoplus_{\overrightarrow{d_{0}}} 
\overrightarrow{\Lambda}$ 
is also $f$-checkable.
\end{lemma}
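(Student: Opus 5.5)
The proof unfolds the definition of $f$-checkability for $\bigoplus_{\overrightarrow{d_{0}}} \overrightarrow{\Lambda}$ and reduces it parameter-by-parameter to the $f$-checkability of a single component. By Definition~\ref{def:hor-comp}, the parameter set of the composition is $\bigcup_{1 \leq i \leq k} A_{i} \times \{i\}$. So we must show that for every parameter of the form $(a_i,i)$ with $a_i \in A_i$ there is some $d_2 \in \bigcup_{1 \leq j \leq k} D_{2.j}$ not of the form $\oplus \overrightarrow{f}(d_1,(a_i,i))$ for any $d_1 \in D_1$.

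Fix such a parameter $(a_i,i)$. The key observation is that the tag $i$ selects the $i$-th component. Indeed, by clause~1 of Definition~\ref{def:hor-comp}, $\oplus \overrightarrow{f}(d_1,(a_i,i)) = f_i(d_1,a_i)$ for all $d_1 \in D_1$. Hence
\[
\{\oplus \overrightarrow{f}(d_1,(a_i,i)) \mid d_1 \in D_1\} = \{f_i(d_1,a_i) \mid d_1 \in D_1\}.
\]
Since $\Lambda_i$ is $f$-checkable, there exists $d_2 \in D_{2.i}$ with $d_2 \neq f_i(d_1,a_i)$ for all $d_1 \in D_1$. This $d_2$ lies in $D_{2.i} \subseteq \bigcup_{1 \leq j \leq k} D_{2.j}$. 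By the displayed equality of images, it is not in the image of $\oplus \overrightarrow{f}(\_,(a_i,i))$. This is exactly the witness required, and since $(a_i,i)$ was arbitrary, the composition is $f$-checkable.

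The only point needing care is that the output set is a union whose components $D_{2.j}$ may overlap. One might worry that a witness for $\Lambda_i$ could be ``hit'' by some other $f_j$. This does not matter, because $f$-checkability is quantified per parameter, and for the fixed parameter $(a_i,i)$ only $f_i$ is ever applied; the images of the other $f_j$ are irrelevant. The default values $\overrightarrow{d}_{0}$ and the function $\oplus \overrightarrow{g}$ play no role in this argument, since $f$-checkability concerns only $f$. I expect no genuine obstacle beyond making this quantifier structure explicit.
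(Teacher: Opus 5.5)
Your proof is correct and is the argument the paper has in mind; the paper gives no written proof, only the remark that the result ``follows easily from the definition of horizontal composition.'' You make that explicit: the tag $i$ reduces $\oplus \overrightarrow{f}(\_,(a_i,i))$ to $f_i(\_,a_i)$, so $\Lambda_i$'s witness in $D_{2.i} \subseteq \bigcup_{1 \leq j \leq k} D_{2.j}$ serves as the witness for the composition, and possible overlaps among the $D_{2.j}$ are irrelevant.
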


\begin{example}[Horizontal composition of XOR-BSeq and D\&C Lingos]
\label{ex:horizontal-comp}
The lingos $\Lambda_{xor.\mathit{BSeq}}$
of Example \ref{ex:XOR-BSeq-Lingo}
and $\lingo_{D\&C}$ of Example \ref{ex:d&c-lingo-old}
share the same $D_{1}$, namely, $\mathbb{N}$.
Therefore, they do have a horizontal composition
$\Lambda_{xor.\mathit{BSeq}} \oplus_{\overrightarrow{d}_{0}} \lingo_{D\&C}$
for any choice of $\overrightarrow{d}_{0}$.  

We refer the reader to Appendix~\ref{app:ex-lingo-horizontal-comp} for details on the executable specification of horizontal composition in Maude.
\end{example}

\subsection{Functional Composition of Lingos}\label{subsubsec:functional-composition}

Given two lingos $\Lambda$ and $\Lambda'$ such that the output data type
$D_2$  of $\Lambda$ coincides with the input data type of $\Lambda'$,
it is possible to define a new lingo $\Lambda \odot \Lambda'$
whose $f$ and $g$ functions are naturally the compositions
of $f$ and $f'$ (resp. $g'$ and $g$) in a suitable way.
Moreover, $\Lambda$  and $ \Lambda'$ can be chosen so that $\Lambda \odot \Lambda'$ is  harder to
compromise than either $\Lambda$ or $\Lambda'$. 

\begin{definition}[Functional Composition]\label{def:fun-comp}
Given lingos $\Lambda=(D_1,D_2,A,f,g)$ and $\Lambda'=(D_2,D_3,A',f',g')$, 
their \emph{functional composition} is the lingo
$\Lambda \odot \Lambda' =
(D_1,D_3,A \times A',f \cdot f',g * g')$, where
for each $d_1 \in D_1$, $d_3 \in D_3$, and 
$(a,a') \in A \times A'$,
\begin{itemize}
\item $f \cdot f'(d_1,(a,a')) =_{\mathit{def}}
f'(f(d_1,a),a')$,
\item $g * g'(d_3,(a,a')) =_{\mathit{def}}
g(g'(d_3,a'),a)$.
\end{itemize}
$(D_1,D_3,A \times A',f \cdot f',g * g')$ is indeed a lingo,
since we have:
\begin{align*}
    g * g'(f \cdot f'(d_1,(a,a')),(a,a')) &= g(g'(f'(f(d_1,a),a'),a'),a) \\
    &=g(f(d_1,a),a) = d_1.
\end{align*}
\end{definition}

\begin{lemma}\label{f-check-4-fun-lemma}
Given lingos $\Lambda=(D_1,D_2,A,f,g)$ and $\Lambda'=(D_2,D_3,A',f',g')$, if $\Lambda'$ is $f$-checkable,
then 
$\Lambda \odot \Lambda'$ is also $f$-checkable.
\end{lemma}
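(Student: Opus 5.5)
The plan is to unfold the definition of $f$-checkability for $\Lambda \odot \Lambda'$ and reduce it directly to that of $\Lambda'$. We must show that for every parameter $(a,a') \in A \times A'$ there exists some $d_3 \in D_3$ that is not of the form $f \cdot f'(d_1,(a,a'))$ for any $d_1 \in D_1$.

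Fix an arbitrary $(a,a') \in A \times A'$. Since $\Lambda'$ is $f$-checkable, applied to the parameter $a' \in A'$, there is a $d_3 \in D_3$ such that no $d_2 \in D_2$ satisfies $d_3 = f'(d_2,a')$. We take this same $d_3$ as the witness for $\Lambda \odot \Lambda'$ at $(a,a')$.

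To verify the witness, suppose for contradiction that $d_3 = f \cdot f'(d_1,(a,a'))$ for some $d_1 \in D_1$. By Definition~\ref{def:fun-comp} this means $d_3 = f'(f(d_1,a),a')$. Setting $d_2 = f(d_1,a) \in D_2$, we get $d_3 = f'(d_2,a')$, which contradicts the choice of $d_3$. Hence $d_3$ lies outside the image of $f \cdot f'(\_,(a,a'))$. Since $(a,a')$ was arbitrary, $\Lambda \odot \Lambda'$ is $f$-checkable.

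There is no real obstacle here. The only point needing care is the order of quantifiers: the witness $d_3$ depends only on $a'$, and it works uniformly for every $a$, because the image of $f'(f(\_,a),a')$ is contained in the image of $f'(\_,a')$. The argument also uses that the lingos are non-trivial, so that $A \times A'$ is non-empty and the statement is not vacuous. No hypothesis on $\Lambda$ is needed.
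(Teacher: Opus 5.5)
Your proof is correct and is essentially the paper's argument: fix $(a,a')$, take the witness $d_3$ that $f$-checkability of $\Lambda'$ provides at $a'$, and observe that any $d_1$ with $d_3 = f'(f(d_1,a),a')$ would give $d_2 = f(d_1,a)$ contradicting the choice of $d_3$. Your closing claim that non-triviality is used is superfluous; the argument needs no non-emptiness assumption and would hold vacuously if $A \times A'$ were empty.
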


\begin{example}[Functional composition of XOR-BSeq and D\&C Lingos]
\label{ex:functional-comp}
The lingos $\Lambda_{xor.\mathit{BSeq}}$
of Example \ref{ex:XOR-BSeq-Lingo}
and $\lingo_{D\&C}$ of Example \ref{ex:d&c-lingo-old}
are functionally composable as
$\Lambda_{xor.\mathit{BSeq}} \odot \lingo_{D\&C}$,
because $\mathbb{N}$ is both the output
data type of $\Lambda_{xor.\mathit{BSeq}}$ and
the input data type of $\lingo_{D\&C}$.
%
%
Note that, by Lemma \ref{f-check-4-fun-lemma}
above, and Theorem \ref{theorem:trasnform-fCheckable} above,
$\Lambda_{xor.\mathit{BSeq}} \odot \lingo_{D\&C}$
is $f$-checkable, in spite of $\Lambda_{xor.\mathit{BSeq}}$
not being so. 
\end{example}

We refer the reader to Appendix~\ref{app:ex-lingo-func-comp} for details on the executable specification of functional composition in Maude.

\section{Lingo Transformations and Dialects as Formal Patterns}\label{sec:dialects}

We make the notion of a \emph{formal pattern} as a transformation of rewrite theories explicit and explain how
it applies to both lingos and dialects.  Since dialects are parametric on the chosen lingo and the chosen protocol, we
begin by explaining how protocols are formalized as generalized actor rewrite theories.  With all these notions in hand we
can explain in detail the \emph{lingo formal pattern} and its rewriting logic semantics and give examples of different dialects for a given
protocol based on the choice of various lingos, sometimes themselves compositions of simpler ones.  Finally, we explain
how both the lingo formal pattern, lingos and dialect formal patterns are specified as \emph{formal executable specifications}
in the Maude rewriting logic language \cite{maude-book}, a theme further developed in Appendix \ref{app:examples}.

\subsection{Protocols as Generalized Actor Rewrite Theories}

\emph{Actors} \cite{Agha86} are a widely used model for distributed systems, in which  concurrent objects communicate through asynchronous message passing. When an actor consumes a received message it can change its state, send new messages, and create new actors. 
Actor systems can be formally specified and executed as \emph{actor rewrite theories} \cite{ooconc}.
In \emph{rewriting logic} \cite{MeseguerTCS92} a concurrent system can be formally specified and executed as a \emph{rewrite theory} 
$\mathcal{R}=(\Sigma,E,R)$, with $(\Sigma,E)$ an equational theory with function symbols $\Sigma$ and equations $E$
that specifies the concurrent system's \emph{states} as elements of the initial algebra (algebraic data type) $\mathbb{T}_{\Sigma/E}$
of $(\Sigma,E)$, and where the system's \emph{local concurrent transitions} are specified by \emph{rewrite rules} in $R$
of the form $u(x_{1},\ldots,x_{n}) \rightarrow v(x_{1},\ldots,x_{n})$, where $u(x_{1},\ldots,x_{n})$ is a $\Sigma$-expression identifying those local state fragments
that can make a local transition to a state fragment of the form $v(x_{1},\ldots,x_{n})$.  Under natural assumptions, a rewrite theory $\mathcal{R}=(\Sigma,E,R)$
is \emph{executable} by rewriting.  Maude \cite{maude-book} is a declarative programming language whose programs, called \emph{system modules}, are rewrite theories
declared as \texttt{mod} $(\Sigma,E,R)$ \texttt{endm}.  Maude has a functional sublanguage of \emph{functional modules},
which are equational theories declared as \texttt{fmod} $(\Sigma,E)$ \texttt{endfm}, specifying an algebraic data type  $\mathbb{T}_{\Sigma/E}$.
For example, (more on this in Section \ref{maude-subsection}) lingos are algebraic data types that can be naturally specified as functional modules in Maude.

Any \emph{communication protocol} $\mathcal{P}$ can be formally specified and programmed as a \emph{generalized actor rewrite theory}
 $\mathcal{P}=(\Sigma,E \cup B,R)$ \cite{DBLP:journals/pacmpl/LiuMOZB22}.  The rewrite rules of an actor rewrite theory are of the form:
\[(\mathit{to}\; B\; from \; A : M) \; <\, B : Cl\, |\, \mathit{atts}\, >  \;
\rightarrow \; <\, B : Cl\, |\, \mathit{atts}'\, > \;\; \mathit{msgs}  \;\; \mathit{actors}.
\]
They specify how actor $B$ of class $Cl$, upon receiving a message from actor $A$, can change its local state (the attribute-value pairs $\mathit{atts}$)
to $\mathit{atts}'$ and may generate several new messages $\mathit{msgs}$ and several new actors $\mathit{actors}$.
Any such rule specifies an \emph{asynchronous} local concurrent transition in a distributed state (called a \emph{configuration})
which is a multiset of objects and messages built up with a multiset union operation $\_\;\_$ (juxtaposition) satisfying the axioms
$B$ of associativity, commutativity and empty multiset $\emptyset$ as unit element \cite{ooconc}.  Generalized actor rewrite theories allow
also rules of the form:
\[<\, B : Cl\, |\, \mathit{atts}\, >  \;
\rightarrow \; <\, B : Cl\, |\, \mathit{atts}'\, > \;\; \mathit{msgs}  \;\; \mathit{objs}.
\]
describing \emph{active actors} that can also change their state and generate new messages and actors without receiving a message.

Because of space limitations, we refer the reader to Appendix~\ref{app:FormPatts} for details on how lingos can be used as Formal Patterns.

\vspace{-1ex}
\subsection{The Dialect Transformation as a Formal Pattern} \label{Dialects-sub}

The \emph{dialect transformation} can be formalized as a formal pattern of the form:
\[
\mathcal{D}: \mathit{GralActorRewThs}\times \mathit{PLingos}\ni (\mathcal{P},\Lambda_{p}) \mapsto \mathcal{D}_{\Lambda_{p}}(\mathcal{P}) \in \mathit{GralActorRewThs}
\]
where $\mathcal{P}$ is a protocol, i.e.,  a  rewrite theory in the class\footnote{Hereafter we assume generalized actor systems
whose actors do not create new actors.}
$\mathit{GralActorRewriteThs}$ of
generalized actor rewrite theories, and $\mathcal{D}_{\Lambda_{p}}(\mathcal{P})$ is another protocol 
obtained from $\mathcal{P}$ by means of $\Lambda_{p}$, where $\Lambda_{p}$ is
a \emph{lingo with parameter function} $p$ in the class $\mathit{PLingos}$, that is, $\Lambda_{p}$ adds to a lingo
$\Lambda$ a parameter function $p: \mathbb{N} \rightarrow A$.

\begin{figure}[t]
    \centering
    \includegraphics[width=\linewidth]{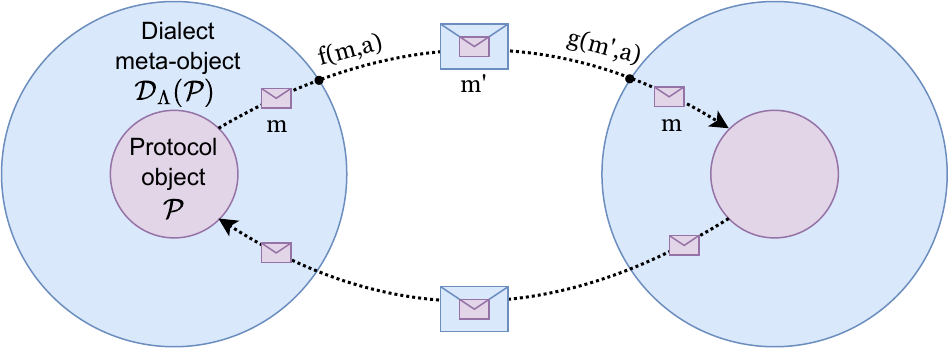}
\vspace{-4ex}    
    \caption{Dialect meta-objects (enhanced from \cite{ProtocolDialects-FormalPatterns}).}
    \label{fig:dialect-meta-objects}
\vspace{-4ex}    
\end{figure}

The essential ideas about the dialect $\mathcal{D}_{\Lambda_{p}}(\mathcal{P})$ are summarized in Figure \ref{fig:dialect-meta-objects}.
Suppose that Alice and Bob are actors in the protocol $\mathcal{P}$ and Alice sends payload $m$ to Bob in a message.
What happens instead in $\mathcal{D}_{\Lambda_{p}}(\mathcal{P})$ is that \emph{Alice and Bob behave as before},
that is, their original protocol $\mathcal{P}$ remains the same.  However, unbeknownst to Alice and Bob,
their communication is now \emph{mediated} by their corresponding \emph{meta-actors} (also named Alice, resp. Bob)
inside which they are now \emph{wrapped}.  Alice's meta-actor
\emph{transforms} the original payload $m$ into $f(m,a)$, and Bob's meta-actor \emph{decodes} $f(m,a)$ into $g(f(m,a),a)=m$
using $\Lambda_{p}$ and
the \emph{shared secret parameter} $a$. Of course, if Bob replies to Alice with another message, the
roles are exchanged: now Bob's meta-actor will transform the new payload $m'$ to $f(m',a')$
and Alice's meta-actor with get back $m'$ because they now share a \emph{new} secret parameter $a'$.
Since parameters change all the time, dialects become \emph{moving targets} for an attacker.

\begin{figure*}[t]
\begin{align*}
\mathit{in}\; &:\; < O_1 : \textit{DC} \mid \mathit{in.buffer} : M_{in}, \mathit{atts}'>\ (to \ O_1 \ from \ O_2 \ : \ P) \\
\rightarrow\ &< O_1 : \textit{DC} \mid \mathit{in.buffer} : (M_{in} \cup (to \ O_1 \ from \ O_2 \ : \ P), \mathit{atts}' > \ 
\\[1ex]
\mathit{out}\; &:\;
< O_1 : \mathit{DC} \mid \mathit{conf} : (< O_1 : C \mid \mathit{atts} >\ (to \ O_2 \ from \ O_1 \ \texttt{:} \ P) \cup \ M),\ \\&\quad\quad\quad\quad\quad\quad\;\mathit{peer.counters}:R ,\ \mathit{atts}'> \\
\rightarrow&\ < O_1 : \mathit{DC} \mid \mathit{conf} : (< O_1 : C \mid \mathit{atts} >\ M \setminus (to \ O_2 \ from \ O_1 \ : \ P)),\\
&\quad\quad\quad\quad\quad\quad\operatorname{update}(\mathit{peer.counters}:R, \mathit{atts}') > \\
&\quad (to \ O_2 \ from \ O_1\ : \ f(P, param(R[O_2]))) 
\\[1ex]
\mathit{deliver}\; &:\;
< O_1 : \textit{DC} \mid \mathit{conf} : (< O_1 : C \mid \mathit{atts} > M),\ \mathit{in.buffer} : M_{in},\\
&\quad\quad\quad\quad\quad\quad\;\mathit{peer.counters}:R ,\ \mathit{atts}'> \\
\rightarrow&\ < O_1 : \textit{DC} \mid \mathit{conf} : (< O_1 : C \mid \mathit{atts} > M \cup \{g(M_{selected}, \operatorname{param}(R[O_2]))\},\\ 
&\quad\quad\quad\quad\quad\quad\mathit{in.buffer} : (M_{in} \setminus M_{selected}),\ \\
&\quad\quad\quad\quad\quad\quad {update}(\mathit{peer.counters}:R, \mathit{atts}') > \\
\mathit{if}&\ size(M_{in}) \geq egressArity \; \land M_{selected} := take(egressArity, M_{in}) \; \land\\
&fromOidTag(M_{selected}, R) = true \; \texttt{.}
\end{align*}
\vspace{-4ex}
\caption{Meta-Actor Rewrite Rules.}
\label{fig:wrapper-actor-rules}
\vspace{-4ex}
\end{figure*}

But what is a \emph{meta-actor}?    In rewriting logic a meta-actor of class $\mathit{Dialect}$ will have the general form:
$<\, o : \mathit{Dialect}\, |\, \mathit{conf}:\, <\, o : C\, |\, \mathit{atts}\, >\, ,\mathit{datts} >$.
That is, one of the attributes in the state of the meta-actor is called $\mathit{conf}$
and consists of a \emph{configuration} of objects and messages with a \emph{single object},
namely, $<\, o : C\, |\, \mathit{atts}\, >$ and, at various times, either an incoming message addressed to
$o$, or outgoing messages sent by $o$ to other actors. 

$\mathit{datts}$ denotes the rest of the
dialect meta-actor's state (i.e., it remaining attribute-value pairs).  What are they?
Besides the object attribute $\mathit{conf}$, $Dialect$ meta-actors have the following additional attributes:
(i) $\mathit{in.buffer}$: a buffer of incoming messages to which the lingo's decoding function $g$ has not yet been applied;
and
(ii) $\mathit{peer.counters}$: a map from actor names to natural numbers that serves as a counter for the number of messages exchanged. It is also used 
to generate (using the parameter function $p$) the new secret parameters used by the lingo's $f$ and $g$ functions for communicating with other actors.

The \emph{behavior} of a \emph{Dialect} meta-actor is specified by the rewrite rules in
 Figure~\ref{fig:wrapper-actor-rules}.  These can be summarized as follows:
 (i) rule $\operatorname{\textit{in}}$ collects incoming messages for the wrapped actor in the $\mathit{in.buffer}$ attribute; (ii) rule $\operatorname{\textit{out}}$ processes messages produced by the wrapped protocol, sending the transformed messages into the network, by applying the lingo's $f$ function with the corresponding parameter; 
 and
 (iii) rule $\operatorname{\textit{deliver}}$ processes messages stored in the buffer, by applying the lingo's $g$ function with the proper parameter, delivering the original version of the message to the protocol.
 Rules $\operatorname{\textit{out}}$ and $\operatorname{\textit{deliver}}$ apply $f$ and $g$, respectively, and supply the parameter $a$ by calling the lingo's $\operatorname{param}$ function, which computes a parameter $a \in A$ from a number $n \in \mathbb{N}$. The dialect meta-object keeps parameters changing by supplying to the $\operatorname{param}$ function the current message number from the map in $\mathit{peer.counter}$. This turns dialects into moving targets.


 \begin{example}[Transforming protocols by means of Protocol Dialects]\label{ex:coealescing}
     Let us leverage the various lingos introduced in previous sections to incrementally construct more sophisticated, and possibly more secure, protocols through the use of dialects. 
     We use as core protocol a parametric version of MQTT, denoted $Mqtt\{D\}$ and explained in Appendix~\ref{app:ex-protocols}, where $D$ represents the data-type of the message payloads (e.g., bit-sequences of $n \in \mathbb{N}$). Furthermore, we denote any given lingo as $\Lambda_{id}\{D\}$, where $id$ identifies the lingo and $D$ indicates the input data type. For instance, $\Lambda_{D\&C}\{BitSEQ\{N\}\}$ refers to the divide and check lingo over bit-sequences of length $n \in \mathbb{N}$.
     
     As a first transformation, we can use the lingo presented in Example~\ref{ex:xor-lingo}. To do so, we can instantiate a dialect $\mathcal{D}_{\Lambda_p}(\mathcal{P})$ over MQTT, using the xor lingo over a payload consisting of 1 byte, i.e. a bit-sequence (or vector) of length 8. The resulting protocol is denoted by $\mathcal{D}_{\Lambda_{XOR}\{BitSEQ\{8\}\}}(Mqtt\{BitSEQ\{8\}\})$. 

     It is important to note that we can, in a very intuitive way, \textit{upgrade} or adapt the new protocol to meet specific needs. For instance, we can increase the parameter size by just increasing the input parameter. This leads to a new protocol such as $\mathcal{D}_{\Lambda_{XOR}\{BitSEQ\{256\}\}}(Mqtt\{BitSEQ\{256\}\})$, which operates over 32-byte payloads.

     A similar approach can be applied to both parameters of a Dialect: the lingo and the underlying protocol. For instance, consider replacing the XOR\{N\} lingo for the XOR-BSEQ lingo introduced in Example~\ref{ex:XOR-BSeq-Lingo}. These kind of substitutions are valid provided that both the input data type of the lingo and protocol coincide. For this purpose, we can instantiate the protocol's data type to be the natural numbers, that is bit sequences of arbitrary length. Accordingly, the resulting protocol can be expressed as $\mathcal{D}_{\Lambda_{xor.BSeq}}(Mqtt\{Nat\})$. Furthermore, we can change the lingo to the divide and check lingo given in Example~\ref{ex:d&c-lingo} in a straightforward way since both XOR-BSEQ and D\&C share input data types. The resulting protocol is $\mathcal{D}_{\Lambda_{D\&C}}(Mqtt\{Nat\})$. 
     
     
     Furthermore, we can use lingo transformations to construct more sophisticated dialects. For example, one may horizontally compose XOR-BSEQ with D\&C, as illustrated in Example~\ref{ex:horizontal-comp}. Then, we can use this new lingo to yield a new protocol defined as $\mathcal{D}_{\Lambda_{xor.BSeq} \oplus_{\overrightarrow{d}_{0}} \Lambda_{D\&C}}(Mqtt\{Nat\})$. Another protocol can be obtained by using the functional composition from Example~\ref{ex:functional-comp}. The resulting protocol would be $\mathcal{D}_{\Lambda_{xor.BSeq} \odot \Lambda_{D\&C}}(Mqtt\{Nat\})$. For a more detailed description of all these instantiations, we refer the reader to Appendix~\ref{app:ex-dialects}.
 \end{example}

 Last but not least, this work provides a substantial extension and simplification of the previous theory of dialects in \cite{ProtocolDialects-FormalPatterns} that includes,
 not only new lingo properties and formal patterns, but the \emph{replacement} of the former \emph{horizontal}, resp. \emph{vertical}, dialect composition
 operations  in \cite{ProtocolDialects-FormalPatterns} by the simpler and more efficient notions of \emph{horizontal}, resp. \emph{functional}, lingo composition operations.

\vspace{-1ex}
\subsection{Lingos, Dialects, and their Formal Patterns in Maude} \label{maude-subsection}

As shown in Appendix~\ref{app:FormPatts} and Section~\ref{Dialects-sub}, dialects can be formalized as equational theories with initial semantics, and
dialect transformations as formal patterns composing or transforming them.  Likewise, we have seen that a dialect $\mathcal{D}_{\Lambda_{p}}(\mathcal{P})$
is a formal pattern transforming a generalized actor rewrite theory $\mathcal{P}$ into another generalized actor rewrite theory.  Since Maude programs are
exactly either equational theories with initial semantics, or rewrite theories, both lingos and
dialects, as well as their transformations, can be made executable in Maude. 
In fact, all the theoretical
concepts and examples presented in this paper have been formally specified in Maude at the same time that their theoretical properties
were proved.  But the interest of these formal specifications is not just theoretical.  On the one hand, the
$D$-transformation proposed in \cite{D-transf-NASA} provides a formal pattern to automatically transform generalized
actor rewrite theories into distributed implementations; on the other hand, dialects formalized in Maude can be
formally analyzed by various model checking verification techniques, both in Maude itself \cite{maude-book} and, for probabilistic
properties, in the QMaude tool \cite{rubio2023qmaude}.

\vspace{-1ex}
\section{Concluding Remarks}\label{sec:conclusion}
\vspace{-1ex}

Dialects are a resource-efficient approach as a first line of defense against outside attackers. Lingos are invertible message transformations used by dialects to 
thwart attackers from maliciously disrupting communication. In this paper we propose new kinds of lingos and make them stronger through lingo transformations, 
including composition operations, all of them new. 
These lingo transformations and composition
operations are \emph{formal patterns} with desirable security properties.
We also propose a refined and simpler definition for dialects by replacing two former dialect composition operations in \cite{ProtocolDialects-FormalPatterns} by considerably 
simpler and more efficient lingo composition operations.  
All these concepts and transformations have been formally specified and made executable in Maude.

Our next step is to explore lingos in action as they would be used by dialects in the face of an on-path attacker.  We have been developing a formal intruder model that, when combined with the dialect specifications provided in this paper, can be used for statistical and probabilistic model checking of dialects. This model provides the capacity to specify the probability that the intruder correctly guesses secret parameters.
For this we are using QMaude\cite{rubio2023qmaude}, which supports probabilistic and statistical model checking analysis of Maude specifications.



\subsubsection*{Acknowledgements}
S.~Escobar and V.~Garc\'{\i}a have been partially supported 
by the grant CIPROM/2022/6 funded by Generalitat Valenciana
and
by the grant PID2024-162030OB-100 funded by MCIN/AEI/10.13039/501100011033 and ERDF A way of making Europe. We also thank Zachary J Flores for his comments.

%
%
%
\bibliographystyle{splncs04}
\bibliography{biblio/dialects, biblio/maude}

\appendix

\section{Proofs of Lingo Properties}\label{app:proofs}

\noindent This Appendix provides proofs for useful properties of lingos, and lingo transformations and composition operations, stated in the paper.

\begin{corollary}\label{cor:corollary1}
Let $\Lambda=(D_1,D_2,A,f,g)$ be a lingo. Then, $\forall d_1 \in D_1$, $\forall d_2 \in D_2$, $\forall a \in A$, $d_2 = f(d_1, a) \Rightarrow d_2 = f(g(d_2, a),a)$. 
\end{corollary}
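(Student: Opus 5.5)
The plan is a direct equational rewriting argument that uses only the defining equation of a lingo from Definition~\ref{def:lingo}, namely $g(f(d_1,a),a)=d_1$ for all $d_1 \in D_1$ and $a \in A$. No earlier theorem is needed. The statement is a pure consequence of this single equation together with substitutivity of equality, so the proof is just a short chain of rewrites.

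First I fix arbitrary $d_1 \in D_1$, $d_2 \in D_2$, $a \in A$ and assume the hypothesis $d_2 = f(d_1,a)$. Next I apply $g(\_,a)$ to both sides of this hypothesis. By the lingo equation this yields $g(d_2,a) = g(f(d_1,a),a) = d_1$. I then apply $f(\_,a)$ to the equality $g(d_2,a)=d_1$, which gives $f(g(d_2,a),a) = f(d_1,a)$. Finally, the hypothesis rewrites the right-hand side to $d_2$. Assembling the chain gives
\[
f(g(d_2,a),a) = f(g(f(d_1,a),a),a) = f(d_1,a) = d_2 ,
\]
which is the desired conclusion.

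There is no real obstacle. The only point needing care is the direction in which equations are used. The lingo axiom is applied only to the inner subterm $g(f(d_1,a),a)$, never in the reverse form $f(g(d_2,a),a)=d_2$, which is not an axiom; as noted after Definition~\ref{def:lingo}, that equation may fail when $d_2$ lies outside the image of $f(\_,a)$. The hypothesis $d_2 = f(d_1,a)$ is exactly what places $d_2$ in that image and licenses the substitution. I would also remark that, since the argument uses only the equation $E$, it holds in every $(\Sigma,E)$-algebra, and hence for every lingo.
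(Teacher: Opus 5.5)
Your proof is correct and is essentially the paper's own argument. The paper cites Lemma~\ref{lem:lemma2}, which applies $g(\_,a)$ to both sides of $d_2 = f(d_1,a)$, to obtain $d_1 = g(d_2,a)$, and then substitutes to get $d_2 = f(d_1,a) = f(g(d_2,a),a)$; that is exactly your chain with the lemma inlined.
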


\noindent {\bf Proof}:
By Lemma \ref{lem:lemma2} in Appendix~\ref{Aux-Lemmas}, $d_1 = g(d_2,a)$. Thus, $d_2 = f(d_1,a) = f(g(d_2,a),a)$.%
$\Box$

\begin{corollary}\label{cor:corollary2}
Let $\Lambda=(D_1,D_2,A,f,g)$ be a lingo. Then, $\forall d_2 \in D_2$,  $\forall a \in A$, $((\exists d_1 
\in D_1, \; f(d_1, a) = d_2) \Rightarrow d_2 = f(g(d_2, a),a))$. 
\end{corollary}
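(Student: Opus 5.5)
The plan is to reduce Corollary~\ref{cor:corollary2} to Corollary~\ref{cor:corollary1}, which is already established. The two statements differ only in where the quantifier over $d_1$ sits. In Corollary~\ref{cor:corollary1}, $d_1$ is universally quantified outside the implication. In the present statement, $d_1$ appears existentially inside its antecedent. Logically, $\forall d_1\,(P(d_1) \Rightarrow Q)$ is equivalent to $(\exists d_1\, P(d_1)) \Rightarrow Q$ whenever $Q$ does not mention $d_1$. That is exactly our situation, since the conclusion $d_2 = f(g(d_2,a),a)$ involves only $d_2$ and $a$.

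Concretely, I will proceed as follows.
\begin{enumerate}
\item Fix arbitrary $d_2 \in D_2$ and $a \in A$, and assume there is some $d_1 \in D_1$ with $f(d_1,a) = d_2$.
\item Instantiate Corollary~\ref{cor:corollary1} at this particular triple $(d_1,d_2,a)$. Its hypothesis $d_2 = f(d_1,a)$ holds by the choice of witness.
\item Conclude $d_2 = f(g(d_2,a),a)$.
\end{enumerate}

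Alternatively, I can avoid Corollary~\ref{cor:corollary1} and argue directly from the lingo equation of Definition~\ref{def:lingo}. From $d_2 = f(d_1,a)$ we get $g(d_2,a) = g(f(d_1,a),a) = d_1$. Applying $f(\_,a)$ to both sides and substituting back gives $f(g(d_2,a),a) = f(d_1,a) = d_2$. This also shows the content of the result: on the image $\{f(d_1,a) \mid d_1 \in D_1\}$, the map $f(\_,a)$ is a left inverse of $g(\_,a)$.

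There is no real obstacle here. The only point requiring care is the quantifier manipulation: the existential witness must be introduced before invoking Corollary~\ref{cor:corollary1}, and the conclusion must not depend on which witness is chosen. I will note explicitly that the witness is in fact unique, since $f(\_,a)$ is injective by the equation $g(f(d_1,a),a)=d_1$. This uniqueness is not needed for the proof.
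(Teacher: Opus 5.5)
Your proposal is correct and takes essentially the paper's route. The paper also starts from the universally quantified implication (the content of Corollary~\ref{cor:corollary1}, obtained via Lemma~\ref{lem:lemma2}) and moves the quantifier over $d_1$ into the antecedent, using that $d_1$ is not free in $d_2 = f(g(d_2,a),a)$; it does this as a chain of equivalences (disjunctive form plus duality of quantifiers), whereas your existential-elimination argument is the natural-deduction version of the same step.
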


\noindent {\bf Proof}: By Lemma \ref{lem:lemma2} in Appendix~\ref{Aux-Lemmas} we have,
$\forall d_2 \in D_2,\forall a \in A, (\forall d_1 \in D_1, d_2 = f(d_1,a) \Rightarrow d_2 = f(g(d_2,a),a)),$
which means
$\forall d_2 \in D_2,\forall a \in A, (\forall d_1 \in D_1, \;
\neg(d_2 = f(d_1,a)) \vee d_2 = f(g(d_2,a),a))$
which by $d_1$ not a free variable in $d_2 = f(g(d_2,a),a)$
is equivalent to
$\forall d_2 \in D_2,\forall a \in A, ((\forall d_1 \in D_1, \;
\neg(d_2 = f(d_1,a))) \vee d_2 = f(g(d_2,a),a))$
which by duality of quantifiers is equivalent to
$\forall d_2 \in D_2,\forall a \in A, (\neg (\exists d_1 \in D_1, \;
d_2 = f(d_1,a)) \vee d_2 = f(g(d_2,a),a))$
which means
$\forall d_2 \in D_2,\forall a \in A, ((\exists d_1 \in D_1, \;
d_2 = f(d_1,a)) \rightarrow d_2 = f(g(d_2,a),a))$
as desired.%
$\Box$

\vspace{1ex}

Corollary \ref{cor:corollary3} below shows that
users of a lingo can check if a received payload $d_2 \in D_2$ 
is of the form $d_2 = f(d_1 , a)$ for 
some $d_1 \in D_1$ and
$a \in A$
the current parameter, i.e., if it was generated
from some actual $d_1 \in D_1$ as $d_2 = f(d_1,a)$.
We call a $d_2 \in D_2$ such that $d_2 = f(d_1,a)$ for some
$d_1 \in D_1$ \emph{compliant} with parameter $a \in A$.
As shown in Corollary \ref{cor:corollary3}
compliance with $a$ can be checked by checking the equality $f(g(d_2,a),a) = d_2$ for any given $d_2 \in D_2$ and $a \in A$.

\begin{corollary}\label{cor:corollary3}
Let $\Lambda=(D_1,D_2,A,f,g)$ be a lingo. Then, $\forall d_2 \in D_2$,  $\forall a \in A$, $\exists d_1, (f(d_1, a) = d_2) \Leftrightarrow d_2 = f(g(d_2, a),a)$. 
\end{corollary}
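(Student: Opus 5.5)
The plan is to prove the biconditional by handling its two directions separately; each needs only the defining lingo equation and Corollary~\ref{cor:corollary2}. Throughout, fix arbitrary $d_2 \in D_2$ and $a \in A$.

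\emph{($\Rightarrow$)} Assume there is some $d_1 \in D_1$ with $f(d_1,a) = d_2$. This is exactly the hypothesis of Corollary~\ref{cor:corollary2}, which gives $d_2 = f(g(d_2,a),a)$. For a self-contained argument, apply $g(\_,a)$ to both sides of $f(d_1,a)=d_2$. The lingo equation $g(f(d_1,a),a)=d_1$ of Definition~\ref{def:lingo} then yields $g(d_2,a)=d_1$, which is the content of Lemma~\ref{lem:lemma2}. Substituting this back gives $f(g(d_2,a),a) = f(d_1,a) = d_2$.

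\emph{($\Leftarrow$)} Assume $d_2 = f(g(d_2,a),a)$. Choose the witness $d_1 := g(d_2,a)$. It lies in $D_1$ because $g : D_2 \times A \rightarrow D_1$. By the assumption, $f(d_1,a) = f(g(d_2,a),a) = d_2$, so $\exists d_1 \in D_1$ with $f(d_1,a) = d_2$.

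No step here is a real obstacle. The only care needed is to read the quantifier correctly: $\exists d_1$ ranges over $D_1$ and is bound only on the left-hand side, so the right-hand side is free of $d_1$. With that reading the two implications combine directly into the stated equivalence. Since $d_2$ and $a$ were arbitrary, the result holds universally. As a consequence, compliance with $a$ is decidable whenever $f$, $g$ and equality on $D_2$ are computable, which is the point made in Example~\ref{ex:f-Ckeckable-lingo}.
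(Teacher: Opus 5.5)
Your proof is correct and follows the paper's argument exactly. You prove $(\Rightarrow)$ via Corollary~\ref{cor:corollary2}, which you also unfold through Lemma~\ref{lem:lemma2} and the lingo equation, and you prove $(\Leftarrow)$ by taking the witness $d_1 := g(d_2,a)$.
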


\noindent {\bf Proof}:
The $(\Rightarrow)$ implication follows
from Corollary \ref{cor:corollary2}.
The $(\Leftarrow)$ implication follows
by choosing 
$d_{1} =g(d_2, a)$.  $\Box$

\vspace{1ex}
\noindent \textbf{Theorem \ref{theorem:trasnform-fCheckable}.}
Let $\Lambda=(D_1,D_2,A,f,g)$ be a lingo, where
    we assume that $D_1$, $D_2$ and $A$ are computable data-types and $|D_{1}| \geq 2$.
    %
    %
    Then, $\Lambda^{\sharp}=(D_1,\allowbreak {D_2}\times{D_2},\allowbreak A \otimes A,\allowbreak f^{\sharp},\allowbreak g^{\sharp})$ is an
    $f$-checkable lingo, where:
    \begin{itemize}
        \item $A \otimes A = A \times A \setminus id_A$, with $id_A = \{(a,a) \in A^2 | a \in A\}$
        \item $f^{\sharp}(d_1, (a,a')) = (f(d_1,a), f(d_1,a'))$
        \item $g^{\sharp}((d_2,d'_2), (a,a')) = g(d_2,a)$
    \end{itemize}
    
\vspace{2ex}

\noindent {\bf Proof}:
We first prove that 
$\Lambda^{\sharp}=(D_1,{D_2}^2,A \otimes A,f^{\sharp},g^{\sharp},\mathit{comp}^{\sharp})$ satisfies the lingo equation
$g(f(d_1,a),a)= d_1$.  Indeed,
$g^{\sharp}(f^{\sharp}(d_1,(a,a')),(a,a')) = g^{\sharp}((f(d_1,a), \\f(d_1,a')),(a,a')) = g((f(d_1,a), a) = d_1$
as desired.

\vspace{1ex}

\noindent To prove that, in addition,
$\Lambda^{\sharp}=(D_1,{D_2}^2,A \otimes A,f^{\sharp},g^{\sharp})$
is an $f$-checkable lingo, note that, since $|D_{1}| \geq 2$,
for each $(a,a') \in A \otimes A$ we can choose $d^{\sharp}_{2} = 
(f(d_1 ,a),f(d'_1,a'))$ with $d_1 \not= d'_1$.
Then, there is no $d''_1 \in D_1$
such that $d^{\sharp}_{2} = f^{\sharp}(d''_1,(a,a'))$,
since this would force $(f(d''_1 ,a),f(d''_1,a'))=(f(d_1 ,a),f(d'_1,a'))$,
which by Lemma \ref{lem:lemma1} in Appendix~\ref{Aux-Lemmas} would force $d''_1 = d_1 = d'_1$, which
is impossible.
$\Box$

\vspace{1ex}
\noindent \textbf{Theorem \ref{theorem:malleable}.}
Let $\Lambda=(D_1,D_2,A,f,g)$ 
be a lingo which has been formally specified as the initial algebra of
an equational theory $(\Sigma_{\Lambda},E_{\Lambda})$ and is
such that: 
(i) $D_{2} \subseteq  D_{1}$, (ii)
there is a 
subset $A_{0} \subseteq A$ with at least two different elements
$a_0 , a_1$ and
such that $\forall d_1 \in D_1$,
$\forall a' \in A_{0}$, $f(d_1,a') \not= d_1$, and
(iii) $\forall \; d_1 \in D_1$, $\forall a \in A$, 
$\forall a' \in A_{0}$, 
$f(f(d_1,a),a')=f(f(d_1,a'),a)$.  Then both $\Lambda$
 and $\Lambda^{\sharp}$   are malleable. \\


\noindent {\bf Proof}:  To see that $\Lambda$ is malleable,
choose as $A_{0}$ the subset in (ii), and as recipe
$t(x,y)$ the term $f(x,y)$, which makes sense by (i).
Then, $\forall d_1 \in D_1$, $\forall a \in A$, $\forall a' \in A_{0}$
we have: (1) $t(f(d_1,a),a') = f(f(d_1,a),a') \not= f(d_1,a)$
by (ii); and (2) there exists $d'_{1}$, namely,
$d'_1 = f(d_1 , a')$ such that, by (iii),
$t(f(d_1,a),a') = f(f(d_1,a),a') = f(f(d_1,a'),a)=
f(d'_1,a)$.

\vspace{1ex}

\noindent To see that $\Lambda^{\sharp}$ is malleable,
define $A^{\sharp} = \{(a_1,a_2) \in A \otimes A \mid a_1 \in A_{0} \}$
and use the recipe $t((x_1 , x_2),(y_1 , y_2))= (f(x_1,y_1), f(x_2,y_1))$.
Then, $\forall d_1 \in D_1$,
$\forall (a_1,a_2) \in A \otimes A$, 
$\forall (a'_1,a'_2) \in A^{\sharp}_{0}$
we have:
(1) $t(f^{\sharp}(d_1,(a_1,a_2)),(a'_1,a'_2))=
(f(f(d_1,a_1),a'_1),\\f(f(d_1,a_2),a'_1)) \not= (f(d_1,a_1),f(d_1,a_2))=
f^{\sharp}(d_1,(a_1,a_2))$ by (ii); and (2)
we can choose $d'_1 = f(d_1,a'_1)$ such that, by (iii),
$t(f^{\sharp}(d_1,(a_1,a_2)),(a'_1,a'_2))=
(f(f(d_1,a_1),\\a'_1),f(f(d_1,a_2),a'_1))= (by 
(f(f(d_1,a'_1),a_1),f(f(d_1,a'_1),a_2))=f^{\sharp}(d'_1,(a_1,a_2)$,
as desired. $\Box$\\

\noindent \textbf{Lemma \ref{f-check-4-fun-lemma}.}
Given lingos $\Lambda=(D_1,D_2,A,f,g)$ and $\Lambda'=(D_2,D_3,A',f',g')$, if $\Lambda'$ if $f$-checkable,
then the \emph{functional composition} 
$\Lambda \odot \Lambda'$ is also $f$-checkable.

\vspace{2ex}

\noindent {\bf Proof}: We need to show that for each $(a,a') \in A \times A'$
there exists $d_3 \in D_3$ such that $\nexists d_1 \in D_1$ such that
$d_3 = f'(f(d_1,a),a')$.  But, by assumption, there exists
a $d_3 \in D_3$ such that $\nexists d_2 \in D_2$ such that
$d_3 = f'(d_2,a')$.  This fact applies, in particular,
to $d_2 = f(d_1,a)$.  $\Box$

\subsection{Auxiliary Lemmas Used in the Above Proofs} \label{Aux-Lemmas}

\begin{lemma}\label{lem:lemma1}
Let $\Lambda=(D_1,D_2,A,f,g)$ be a lingo. Then, $\forall d_1, d'_1 \in D_1$, $\forall a \in A$ $f(d_1, a) = f(d'_1, a) \Rightarrow d_1 = d'_1$. 
\end{lemma}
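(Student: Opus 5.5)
The plan is to obtain injectivity of $f(\_,a)$ directly from the defining lingo equation of Definition~\ref{def:lingo}, namely $\forall d_1 \in D_1$, $\forall a \in A$, $g(f(d_1,a),a)=d_1$. The idea is that $g(\_,a)$ is a left inverse of $f(\_,a)$, and any function with a left inverse is injective. No other result of the paper is needed, which matters: Lemma~\ref{lem:lemma1} is itself used in the proof of Theorem~\ref{theorem:trasnform-fCheckable}, so depending on later corollaries would risk circularity.

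The steps, in order, are as follows.
\begin{enumerate}
\item Fix arbitrary $d_1, d'_1 \in D_1$ and $a \in A$, and assume $f(d_1,a) = f(d'_1,a)$.
\item Apply the function $g(\_,a)$ to both sides of this equality. Since $g$ is a function, equal arguments yield equal values: $g(f(d_1,a),a) = g(f(d'_1,a),a)$.
\item Rewrite each side with the lingo equation $g(f(x,a),a)=x$, instantiated at $x=d_1$ and at $x=d'_1$ respectively. This gives $d_1 = g(f(d_1,a),a) = g(f(d'_1,a),a) = d'_1$.
\item Discharge the assumption. Since $d_1$, $d'_1$ and $a$ were arbitrary, we conclude $\forall d_1, d'_1 \in D_1$, $\forall a \in A$, $f(d_1,a) = f(d'_1,a) \Rightarrow d_1 = d'_1$.
\end{enumerate}

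There is no real obstacle here. The only point to be careful about is that the same parameter $a$ is used on both sides, which is exactly what the equation $g(f(d_1,a),a)=d_1$ requires. Because the statement fixes a single $a$ in the hypothesis $f(d_1,a)=f(d'_1,a)$, the equation applies to both sides without change. The argument also works equationally in the theory $(\Sigma,E)$ by congruence for the symbol $g$, so it holds in every lingo and not merely in some particular model.
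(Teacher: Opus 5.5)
Your proof is correct and matches the paper's. Both apply $g(\_,a)$ to both sides of $f(d_1,a)=f(d'_1,a)$ and then use the lingo equation to get $d_1 = g(f(d_1,a),a) = g(f(d'_1,a),a) = d'_1$.
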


\noindent {\bf Proof}:
Applying $g$ on both sides of the condition, by the
definition of lingo we get,
$d_1 = g(f(d_1,a),a) = g(f(d'_1,a),a) = d'_1$. $\Box$

\begin{lemma}\label{lem:lemma2}
Let $\Lambda=(D_1,D_2,A,f,g)$ be a lingo. Then, $\forall d_1 \in D_1$, $\forall d_2 \in D_2$, $\forall a \in A$, $d_2 = f(d_1,a) \Rightarrow d_1 = g(d_2,a)$.
\end{lemma}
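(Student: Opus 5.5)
The claim is that for every $d_1 \in D_1$, $d_2 \in D_2$ and $a \in A$, the hypothesis $d_2 = f(d_1,a)$ implies $d_1 = g(d_2,a)$. I will argue directly from the defining equation of a lingo in Definition~\ref{def:lingo}, namely $g(f(d_1,a),a)=d_1$ for all $d_1 \in D_1$ and $a \in A$. No other earlier result is needed. The argument is a single substitution, in the same style as the proof of Lemma~\ref{lem:lemma1}.

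Fix arbitrary $d_1 \in D_1$, $d_2 \in D_2$ and $a \in A$, and assume $d_2 = f(d_1,a)$. Since $g$ is a function, equal arguments give equal values. Applying $g(\_,a)$ to both sides of the hypothesis therefore yields $g(d_2,a) = g(f(d_1,a),a)$. The lingo equation rewrites the right-hand side to $d_1$, so $g(d_2,a) = d_1$, which is exactly the conclusion. Since $d_1$, $d_2$ and $a$ were arbitrary, the universally quantified statement follows.

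There is no real obstacle. The only point to make explicit is the step from $d_2 = f(d_1,a)$ to $g(d_2,a) = g(f(d_1,a),a)$. This uses the congruence of equality under function application, which holds because $g$ is interpreted as a total function $D_2 \times A \to D_1$. It is therefore legitimate to apply $g$ to an arbitrary $d_2 \in D_2$, even one not in the image of $f(\_,a)$; in our case the hypothesis guarantees $d_2$ is in that image anyway.
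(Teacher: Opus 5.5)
Your proof is correct and is essentially identical to the paper's: both apply $g(\_,a)$ to both sides of $d_2 = f(d_1,a)$ and then use the lingo equation to get $g(d_2,a) = g(f(d_1,a),a) = d_1$. Your explicit remark about congruence of equality under the total function $g$ simply spells out the step the paper leaves implicit.
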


\noindent {\bf Proof}:  Applying $g$ to both sides of the
condition, by the
definition of lingo we get,
$g(d_2,a) = g(f(d_1,a),a) = d_1$.  $\Box$.

\section{Maude Specifications of Lingos and Dialects }\label{app:examples}

In this section, we provide more details
on how lingos and dialects are specified as formal executable specifications in Maude.
Some details are omitted for the sake of a simpler exposition.

\subsection{Theories, Views and Parameterised Modules in Maude}\label{app:param}

Maude \cite{maude-book} supports \emph{parameterized functional modules}  having a \emph{free algebra semantics}.
They are parameterized by one of more functional theories.  For example, \verb+fmod MSET{X :: TRIV} endfm+
is a parameterized module with parameter theory \texttt{TRIV} having a single parameter type \texttt{Elt} and no
operations or equations that map any chosen set $X$ of elements to the free algebra of (finite) multisets on the elements
of $X$.  The choice of an actual parameter $X$ is made by a Maude \emph{view} instantiating in this case 
\texttt{TRIV} to the chosen data type $X$.  For example, we can choose as elements the natural numbers
by a view \texttt{Nat} mapping \texttt{TRIV} to the functional module \texttt{NAT} of natural numbers
and mapping the parameter type (called a sort) \texttt{Elt} to the type (sort) \texttt{Nat} in \texttt{NAT}.
The semantics of the thus instantiated module \verb+fmod MSET{Nat}+ is the free algebra
of multisets with elements in the natural numbers. 
Likewise, \emph{parameterized system modules}  have a \emph{free model semantics}.
For example, a parameterized system module \verb+mod CHOICE{X :: TRIV} endm+ that imports 
the parameterized functional module \verb+fmod MSET{X :: TRIV} endfm+ can have a  rewrite rule
for non-deterministically choosing an element
in a multiset of elements.  Then, the instantiation \verb+mod CHOICE{Nat} endm+ provides non-deterministic
choice of an element in a multiset of natural numbers.  

The class of all \emph{lingos} is specified by a \emph{functional theory} \texttt{LINGO}, whereas
specific lingos are specified as \emph{functional modules} (which can be parameterised) with initial algebra semantics, i.e.,
they are executable algebraic data types.  Instead, the \emph{dialect} formal pattern is
specified as a \emph{parameterized system module} with two parameters: a lingo parameter theory and a protocol parameter theory.  
When instantiated by views to:
(i) a concrete lingo $\Lambda$, and (ii) a concrete protocol $\mathcal{P}$, that is, a generalized actor rewrite theory,
application of the dialect pattern
results in a new protocol that endows the original protocol  $\mathcal{P}$ with the dialect behavior ``speaking'' 
the chosen lingo $\Lambda$.

\subsection{Lingos}\label{app:ex-lingos}
In Maude, an equational theory $(\Sigma,E)$ can be specified as a \emph{functional theory}.  Therefore, the notion of lingo given in Definition~\ref{def:lingo} has the following natural specification in Maude:

\begin{Verbatim}
fth LINGO is
    sorts D1 D2 A . op f : D1 A -> D2 . op g : D2 A -> D1 .
    var d1 : D1 . var a : A . eq g(f(d1,a),a) = d1 .
endfth
\end{Verbatim}

Since lingos are a key component of dialects\cite{ProtocolDialects-FormalPatterns}, there is some extra information that a lingo must provide so a dialect can apply it. We easily extend the above theory, by using Maude's \texttt{including} on \texttt{LINGO}. For more information on importations in Maude (e.g., \texttt{protecting}), see \cite{maude-book}. 
Theory \texttt{PMLINGO} introduces: i) a function called \texttt{param} for computing the corresponding parameter value $\mathit{param}(n) \in A$ from a natural number $n$, and ii) ingress and egress arities, specifying how many messages the lingo receives as inputs, and how many messages does it generate as outputs.

\begin{Verbatim}
fth PMLINGO is
    protecting NAT . including LINGO .
    op param : Nat -> A .
    ops ingressArity egressArity : -> NzNat .
endfth
\end{Verbatim}

Let us specify in Maude the parametrised data type $\{0,1\}^{n}$ parametric on a non-zero bit-vector length $n$ and its associated \texttt{xor} function.  Since in Maude's built-in module \texttt{NAT} of (arbitrary precision) natural numbers there is a function \texttt{xor} that, given numbers $n,m$, computes the number  $n \; \texttt{xor}\; m$ whose associated bit-string is the bitwise exclusive or of $n$ and $m$ when represented as bit-strings, we can define such a parametrised module as follows:

\begin{Verbatim}
fth NzNATn is 
    protecting NAT . 
    op n : -> NzNat [pconst] . 
endfth

fmod BIT-VEC{N :: NzNATn} is protecting NAT .
    sorts BitVec{N} BitStr{N} . subsort BitVec{N} < BitStr{N} .

    op [_] : Nat -> BitStr{N} [ctor] . vars N M : Nat .
    cmb [N] : BitVec{N} if N < (2 ^ N$n) .

    op _xor_ : BitStr{N} BitStr{N} -> BitStr{N} [assoc comm] .
    eq [N] xor [M] = [N xor M] .
endfm

view 8 from NzNATn to NAT is 
    op n to term 8 . 
endv
\end{Verbatim}

We can easily define its instantiation for bytes.

\begin{Verbatim}
fmod BYTE is
 protecting BIT-VEC{8} * (sort BitVec{8} to Byte) .
endfm
\end{Verbatim}

And now we can evaluate some expressions.

\begin{Verbatim}
reduce in BYTE : [3] xor [5] .
result Byte: [6]
\end{Verbatim}

After having defined \verb@BIT-VEC{N :: NzNATn}@, 
let us now define the parametrized lingo $\Lambda_{\mathit{xor}}=(\{0,1\}^{n},\{0,1\}^{n},\{0,1\}^{n},\oplus,\oplus)$ from Example~\ref{ex:xor-lingo} in Maude as follows:
    
\begin{Verbatim}
fmod IDLINGO{L :: LINGO} is
    op f' : L$D1 L$A -> L$D2 . op g' : L$D2 L$A -> L$D1 .
    var d1 : L$D1 .  var d2 : L$D2 .  var a : L$A .
    eq f'(d1,a) = f(d1,a) . eq g'(d2,a) = g(d2,a) .
endfm

view xorl{N :: NzNATn} from LINGO to BIT-VEC{N} is
    sort D1 to BitVec{N} . sort D2 to BitVec{N} . sort A to BitVec{N} .
    op f to _xor_ . op g to _xor_ .
endv

fmod XOR-L{N :: NzNATn} is
    protecting IDLINGO{xorl{N}} . 
endfm
\end{Verbatim}

Mathematically, what the above view \texttt{xorl} states is that for each value of the parameter \texttt{n}, choosing \texttt{f} and \texttt{g}
to be the \verb@_xor_@ function is an executable algebra $\Lambda_{\mathit{xor}}$ satisfying the equation in the \texttt{LINGO} theory.

Lingos are used by Dialects to transform the messages from, and to, the underlying object. 
The meaning of the parametrized module \verb@IDLINGO{L :: LINGO}@
is that of the \emph{identical lingo} obtained
by instantiating the theory \texttt{LINGO} with concrete data
types \texttt{D1}, \texttt{D2} and \texttt{A}, and concrete
functions \texttt{f} and \texttt{g}.  Recall that such instantiations
are achieved in Maude by means of a \emph{view}.  The only notational
difference is that we rename \texttt{f} and \texttt{g} by
\texttt{f'} and \texttt{g'}.  We then instantiate the theory
\texttt{LINGO} to   \verb@BIT-VEC{N}@ by means of a \emph{parametrized view}
\verb@xorl{N :: NZNATn}@.  Our desired parametrized lingo
$\Lambda_{\mathit{xor}}=(\{0,1\}^{n},\{0,1\}^{n},\{0,1\}^{n},\oplus,\oplus)$
is then the parametrized module \verb@XOR-LINGO{N :: NZNATn}@. 

We can then instantiate \verb@XOR-LINGO{N :: NZNATn}@ to bytes as follows:

\begin{Verbatim}
fmod BYTE-L is
    protecting XOR-L{8} * (sort BitVec{8} to Byte) .
endfm
\end{Verbatim}

And now we can invoke the lingo function \texttt{f'}.

\begin{Verbatim}
reduce in BYTE-L : f'([3], [5]) .
result Byte: [6]
\end{Verbatim}

As we said, Maude's built in module \texttt{NAT} for natural numbers has a function \texttt{xor}. This allows us to easily define lingo $\Lambda_{xor.BSeq}$ from Example~\ref{ex:XOR-BSeq-Lingo} as follows:

\begin{Verbatim}
view xorl-seq-l from LINGO to NAT is
    sort D1 to Nat . sort D2 to Nat . sort A to Nat .
    op f to _xor_ . op g to _xor_ .
endv

fmod XOR-BSEQ-L is 
    protecting IDLINGO{xorl-seq-l} . 
endfm
\end{Verbatim}

And evaluate some expressions.

\begin{Verbatim}
reduce in XOR-BSEQ-L : f'(3, 5) .
result NzNat: 6
==========================================
reduce in XOR-BSEQ-L : g'(6, 5) .
result NzNat: 3
\end{Verbatim}

In a very similar way, the functional module \texttt{NAT} also provides us with two necessary operations for the divide and check lingo of Example~\ref{ex:d&c-lingo}. These operations are the quotient (\texttt{quo}) and the reminder (\texttt{rem}) functions. Furthermore, we also need a structure for pairs of numbers to represent the output $D_2$ of such a lingo. Such need is satisfied by a functional module \texttt{NAT-PAIR} declaring and defining the pair structure and its two projection operations. Thus, lingo $\Lambda_{D\&C}$ can be formally specified in Maude as the functional module \verb@D&C-L@
as follows\footnote{Notice how \texttt{NAT-PAIR} includes \texttt{NAT}, allowing us to define the lingo in a straightforward way.}:

\begin{Verbatim}
fmod NAT-PAIR is protecting NAT .
    sort NatPair .
    op [_,_] : Nat Nat -> NatPair [ctor] . ops p1 p2 : NatPair -> Nat .
    vars n m : Nat . eq p1([n,m]) = n . eq p2([n,m]) = m .
endfm

view D&C-ling from LINGO to NAT-PAIR is
    sort D1 to Nat . sort D2 to NatPair . sort A to Nat .
    var n : D1 . var P : D2 . var a : A .
    op f(n,a) to term [(n + (a + 2)) quo (a + 2), 
                       (n + (a + 2)) rem (a + 2)] .
    op g(P,a) to term sd(((p1(P) * (a + 2)) + p2(P)), (a + 2)) .
endv

fmod D&C-L is
    protecting IDLINGO{D&C-ling} .
endfm
\end{Verbatim}

And evaluate some expressions.

\begin{Verbatim}
reduce in D&C-L : f'(13, 3) .
result NatPair: [3, 3]
==========================================
reduce in D&C-L : g'([3, 3], 3) .
result NzNat: 13
==========================================
reduce in D&C-L : g'(f'(13, 3), 3) .
result NzNat: 13
\end{Verbatim}


\subsection{Horizontal Composition}\label{app:ex-lingo-horizontal-comp}


The horizontal composition operation, presented in Definition~\ref{def:hor-comp}, of $k$ lingos sharing a common input data type can be specified in Maude as a functional module parametrised by the $k$-\texttt{LINGO} parameter theory.  We illustrate 
here the construction for $k=2$, where $2$-\texttt{LINGO} is abbreviated to \texttt{DILINGO}.  
Furthermore, we extend the notion of \texttt{DILINGO} with theory \texttt{W-DILINGO}, which adds two constant values to represent weights. These weights will be used for the bias when choosing a lingo, following Definition~\ref{def:hor-comp}.

\begin{Verbatim}
fth DILINGO is
    protecting NAT .
    sorts D1 D21 D22 A1 A2 .
    
    op f1 : D1 A1 -> D21 . op f2 : D1 A2 -> D22 .
    op g1 : D21 A1 -> D1 . op g2 : D22 A2 -> D1 .
    op d0.1 : -> D21 . op d0.2 : -> D22 .

    var a1 : A1 . var a2 : A2 . var d1 : D1 .
    eq g1(f1(d1, a1), a1) = d1 .
    eq g2(f2(d1, a2), a2) = d1 .

    ops ingressArity egressArity : -> NzNat .

    op param1 : Nat -> A1 .
    op param2 : Nat -> A2 .
endfth

fth W-DILINGO is
    including DILINGO .
    
    ops w1 w2 : -> NzNat [pconst] .
endfth
\end{Verbatim}

Next, we can use this theory in a parameterised functional module \texttt{HOR-COMP} to specify in Maude the horizontal composition of two lingos. Notice that the imported module \texttt{FDICE} (omitted) models a die biased with weights, which is used by the pseudo-random function \texttt{throw}  as part of the lingo-election procedure.

\begin{Verbatim}
fmod HOR-COMP{WL :: W-DILINGO} is
    protecting FDICE .
    sorts D2 A . subsorts WL$D21 WL$D22 < D2 .

    var N : Nat .
    op param : Nat -> Nat .
    eq param(N) = N .

    op +f : WL$D1 Nat -> D2 .
    op +g : D2 Nat -> WL$D1 .

    var d1 : WL$D1 . var d2 : D2 .

    ceq +f(d1, N) = f1(d1, param1(N)) if throw(N, WL$w1 <;> WL$w2) == 1 .
    ceq +f(d1, N) = f2(d1, param2(N)) if throw(N, WL$w1 <;> WL$w2) == 2 .

    ceq +g(d2, N) = g1(d2, param1(N)) if throw(N, WL$w1 <;> WL$w2) == 1 .
    ceq +g(d2, N) = g2(d2, param2(N)) if throw(N, WL$w1 <;> WL$w2) == 2 .
endfm
\end{Verbatim}

We only need to specify a view from \texttt{DILINGO} to \texttt{NAT-PAIR} and \textit{feed} it to functional module \texttt{HOR-COMP} to obtain a horizontal composition of lingos. A view for Example~\ref{ex:horizontal-comp}, and its instantiation in the functional module \texttt{XORBS\&DC-L}, showing the horizontal composition of $\Lambda_{xor.BSeq}$ and $\Lambda_{D\&C}$ lingos, is\footnote{It is important to note that, in this view, we have decided to give more weight to the second lingo, thus obtaining a \textit{biased} die. If we want a \textit{fair} die, we should give the same weights to both lingos.}:

\begin{Verbatim}
view xor&dc from W-DILINGO to NAT-PAIR is 
    sort D1 to Nat . sort D21 to Nat . sort D22 to NatPair .
    sort A1 to Nat . sort A2 to Nat .

    op f1 to _xor_ . op g1 to _xor_ .
    op f2 to term [(n+(a+2)) quo (a+2), (n+(a+2)) rem (a+2)] .
    op g2 to term sd(((p1(P) * (a+2)) + p2(P)), (a+2)) .

    op d0.1 to term 0 . op d0.2 to term [0,0] .

    op w1 to term 1 .
    op w2 to term 5 .
endv

fmod XORBS&DC-L is
    protecting HOR-COMP{xor&dc} .
endfm
\end{Verbatim}

Additionally, since the horizontal composition of two lingos is itself a lingo, we can specify the following view from theory \texttt{PMLINGO} to our recently defined \texttt{HOR-COMP\{DL :: DILINGO\}} module. Such a view can then be used to instantiate a dialect with an horizontal composition as a lingo, as we show in Appendix~\ref{app:ex-protocols}.

\begin{Verbatim}
view HOR-COMP{WL :: W-DILINGO} from PMLINGO to HOR-COMP{WL} is
    sort D1 to WL$D1 .
    sort D2 to D2 .
    sort A to Nat .

    op f to +f .
    op g to +g .
endv
\end{Verbatim}

\subsection{Functional composition}\label{app:ex-lingo-func-comp}

The  functional composition operation $(\Lambda,\Lambda') \mapsto \Lambda \odot \Lambda'$, presented in Definition~\ref{def:fun-comp},
can be formally specified in Maude
with a parametrised module \texttt{FUN-COMP}, parametric
on the 
theory \texttt{COMP-LINGOS} 
of (functionally) composable
lingos. 

\begin{Verbatim}
fth COMP-LINGOS is
    sorts D1 D2 D3 A A'.
    op f : D1 A -> D2 . op f' : D2 A' -> D3 .
    op g : D2 A -> D1 . op g' : D3 A' -> D2 .
    
    var d1 : D1 . var d2 : D2 . var a : A . var a' : A' .
    eq g(f(d1,a),a) = d1 . eq g'(f'(d2,a')) = d2 .

    op param : Nat -> A . op param' : Nat -> A' .
endfth

fmod FUN-COMP{CL :: COMP-LINGOS} is
    sort pairA . op a[_,_] : CL$A CL$A' -> pairA [ctor] .
    
    var d1 : D1 . var d3 : D3 . var a : A . var a' : A' . var n : Nat .
    op f.f' : D1 pairA -> D3 . eq f.f'(d1,a[a,a']) = f'(f(d1,a),a') .
    op g*g' : D3 pairA -> D1 . eq g*g'(d3,a[a,a']) = g(g'(d3,a'),a) .

    op param.param' : Nat -> pairA .
    eq param.param'(n) = a[param(n),param'(n)] .
endfm
\end{Verbatim}

Similar to horizontal composition, we can now create a view from \texttt{NAT-PAIR} to \texttt{COMP-LINGOS} and \textit{feed} it to the parametrised functional module \texttt{FUN-COMP} to obtain a functional composition. A view for Example~\ref{ex:functional-comp}, and its instantiation in functional module \texttt{XOR*D\&C-L} that functionally composes $\Lambda_{xor.BSeq}$ and $\Lambda_{D\&C}$, is as follows:

\begin{Verbatim}
view xor*dc from COMP-LINGOS to NAT-PAIR is
    sort D1 to Nat .
    sort D2 to Nat .
    sort D3 to NatPair .
    sort A to Nat .
    sort A' to Nat .
    
    op f to _xor_ . op g to _xor_ .
    op f' to term [(n+(a+2)) quo (a+2), (n+(a+2)) rem (a+2)] .
    op g' to term sd(((p1(P) * (a+2)) + p2(P)), (a+2)) .

    op param to random . op param' to random .
endv

fmod XOR*D&C-L is
    is protecting FUN-COMP{xor*dc} .
endfm
\end{Verbatim}

Furthermore, since the functional composition of two lingos yields a lingo, we can specify a view from the functional theory of \texttt{PMLINGO} to our functional composition module \texttt{FUN-COMP\{CL\}}. This view allows us to instantiate a dialect with a functional composition as a lingo in Appendix~\ref{app:ex-protocols}.

\begin{Verbatim}
view FUN-COMP{CL :: COMP-LINGOS} from PMLINGO to FUN-COMP{CL} is
    sort D1 to CL$D1 . sort D2 to CL$D3 . sort A to pairA .
    op f to f.f' . op g to g*g' . op param to param.param' .
endv
\end{Verbatim}

\subsection{Dialects}\label{app:ex-dialects}

Dialects are protocol transformations.  We first need to explain how protocols are specified in Maude. Maude supports special syntax for
rewrite theories that are concurrent object systems and, in particular, for protocols, which are generalized actor rewrite theories.
Maude theories are in this way specialized to \emph{object-oriented theories}, with syntax \texttt{oth}.  Likewise, Maude modules
are in this way specialized to \emph{object-oriented modules}, with syntax \texttt{omod}. 
Protocols will be specified in Maude as object-oriented modules.
We are interested in the class of all protocols, i.e., of all
generalized actor rewrite theories; but for each protocol we wish to make explicit the data type of its \emph{payloads}.
This can be achieved with the following \texttt{PROTOCOL} object theory, where we require a data type (sort) for the payload, and 
specify a message format to carry that payload. 
\begin{Verbatim}
oth PROTOCOL is
    sort Payload .
    msg to_from_:_ : Oid Oid Payload -> Msg .
endoth
\end{Verbatim}

Therefore, a concrete protocol and its payload data type can be identified by a \emph{view} from the \texttt{PROTOCOL} object-oriented theory
to the Maude object-oriented module specifying the protocol itself.  With the \texttt{PROTOCOL} object-oriented theory
and the lingo functional theory given in Section~\ref{app:ex-lingos}, following the description of Section~\ref{Dialects-sub} 
we can specify the \emph{dialect formal pattern} in Maude
by means of the following parameterised object module (details omitted):

\begin{Verbatim}
omod DIALECT{L :: LINGO, P :: PROTOCOL} is
    ...
    class Dialect{L,P} | conf : Configuration, 
                         in-buffer : Configuration, 
                         peer-counters : Map{Oid, Nat} .
    ...
endom
\end{Verbatim}

The above \verb@DIALECT{L :: LINGO, P :: PROTOCOL}@ parameterised object
module also contain the \emph{rules} depicted in Figure~\ref{fig:wrapper-actor-rules}, as well as various other sorts and auxiliary functions. 

\subsection{Various Dialect Examples based on the MQTT Protocol}\label{app:ex-protocols}

MQTT~\cite{HunkelerTS08} is a lightweight, publish-subscribe, protocol 
popular in devices with limited resources or network bandwidth. MQTT provides no security itself, and, although it can be run over secure transport
protocols such as TLS, it is known to be commonly misconfigured [29]. This
makes it an attractive application for protocol dialects [1].

We formally specified in Maude the semantics of MQTT as a parameterized object-oriented module \texttt{MQTT\{D\}}, parametric on a data type \texttt{D} of payloads (details are omitted for brevity). As an example, we show the process of a client connecting to a broker with the following rules.

\begin{Verbatim}
rl [mqtt/C/send-connect] :
    < Me : MqttClient | peer : nothing, cmdList : connect(B); CMDL, 
                        Atrs >
    =>
    < Me : MqttClient | peer : nothing, cmdList : CMDL, Atrs > 
    (to B from Me : connect(B)) .

rl [mqtt/B/accept-connect] : 
    < Me : MqttBroker | peers : Ps, Atrs > (to Me from O : connect(Me)) 
    =>
    < Me : MqttBroker | peers : insert(O, Ps), Atrs > 
    (to O from Me : connack) .

rl [mqtt/C/recv-connack] : 
    < Me : MqttClient | peer : nothing, Atrs > (to Me from O : connack)
    =>
    < Me : MqttClient | peer : O, Atrs > .
\end{Verbatim}

The first rule models the behaviour of an \texttt{MqttClient} with identifier \texttt{Me} sending a \texttt{connect} message to an \texttt{MqttBroker} with identifier \texttt{B}. The second rule represents what the \texttt{MqttBroker} will do upon receiving the \texttt{connect} message, i.e. add the \texttt{MqttClient} to it's set of peers and send back to the sender an acknowledgement message. With the third rule, the \texttt{MqttClient} processes any incoming acknowledgement message by setting its \texttt{peer} attribute with the senders identifier.

We can make explicit that the parameterised functional moule \texttt{MQTT\{D\}} is a protocol, by defining the following view from the object-oriented theory \texttt{PROTOCOL} to it:

\begin{Verbatim}
view MqttProtocol{D :: TRIV} from PROTOCOL to MQTT{D} is
    sort Payload to D$Elt .
endv
\end{Verbatim}

Finally, we can use this view, together with the provided views for lingos and lingo transformations in Sections~\ref{app:ex-lingos}, \ref{app:ex-lingo-horizontal-comp} and \ref{app:ex-lingo-func-comp}, to instantiate the corresponding dialects mentioned in Example~\ref{ex:coealescing}:

\begin{itemize}
    \item $\mathcal{D}_{\Lambda_{XOR}\{8\}}(Mqtt\{BitSEQ\{8\}\})$ can be obtained with \texttt{DIALECT\{xorl\{8\},\\MQTT\{BitVEC\{8\}\}\}}, where \texttt{BitVEC\{8\}} is a parametric view to represent bit-sequences of length 8 in this case.
    \item $\mathcal{D}_{\Lambda_{XOR}\{BitSEQ\{256\}\}}(Mqtt\{BitSEQ\{256\}\})$ can be obtained with \texttt{\\DIALECT\{xorl\{256\},MQTT\{BitVEC\{256\}\}\}}.
    \item $\mathcal{D}_{\Lambda_{xor.BSeq}}(Mqtt\{Nat\})$ can be obtained with \texttt{DIALECT\{xor-seq-l,\\MQTT\{Nat\}\}}, where \texttt{Nat} is a \textit{trivial} view mapping elements to the natural numbers.
    \item $\mathcal{D}_{\Lambda_{D\&C}}(Mqtt\{Nat\})$ can be obtained with \texttt{DIALECT\{D\&C-ling,MQTT\{Nat\}}.
    \item $\mathcal{D}_{\Lambda_{xor.BSeq} \oplus_{\overrightarrow{d}_{0}} \Lambda_{D\&C}}(Mqtt\{Nat\})$ can be obtained with \texttt{DIALECT\{HOR-COMP\{xor\&dc\},\\MQTT\{Nat\}}.
    \item $\mathcal{D}_{\Lambda_{xor.BSeq} \odot \Lambda_{D\&C}}(Mqtt\{Nat\})$ can be obtained with \texttt{DIALECT\{FUN-COMP\{xor*dc\},\\MQTT\{Nat\}}.
\end{itemize}

\section{Authenticating Lingos}\label{app:authenticating-lingos}

\noindent In this section we make use of the pseudo-random numbers that are  used to form the parameters shared by two  principals (Alice and Bob)  to transform a lingo into a lightweight \emph{authenticating lingo}. Although we do not assume that these numbers shared by Alice and Bob are unknown to other principals in the enclave, our trust assumptions imply that those other principals will not take advantage of that knowledge.



The key idea about an authenticating lingo is that, if Bob receives
a transformed payload $f(d_1,a)$ supposedly from Alice, he has a way
to check that it was sent by Alice because he can extract from
$f(d_1,a)$ some data that was produced using
information shared between Alice and Bob that is unavailable to an outside attacker.
In this case it will be the result of computing a one-way hash function $\mathit{hash}$ over their identities 
and their current shared parameter $a$. To make this possible, we need to make explicit two
data types involved in the communication.  First, honest participants
in the communication protocol using lingo $\Lambda$ typically have 
corresponding \emph{object identifiers} that uniquely identify 
them and
belong to a finite data type $\mathit{Oid}$.
Second, the result of computing the $\mathit{hash}$  belongs to another finite data type $H$.  Here is
the definition:

\begin{definition} An \emph{authenticating lingo} is a tuple 
$((D_1,\allowbreak D_2,\allowbreak  A,\allowbreak f,\allowbreak g),\allowbreak \mathit{Oid},\allowbreak H,\allowbreak \mathit{param},\allowbreak 
\mathit{hash},\allowbreak \mathit{code})$
such that:
\begin{enumerate}
\item $(D_1 , D_2 ,A,f,g)$ is a lingo;

\item $\mathit{Oid}$ and $H$ are finite data types of, respectively,
object identifiers and outputs of $\mathit{hash}$.

\item $\mathit{param}$ is a function $\mathit{param}: \mathbb{N} \times  \mathit{Oid} \otimes \mathit{Oid} \rightarrow A$;

\item $\mathit{hash}$ is a function $\mathit{hash}: \mathbb{N} \times  \mathit{Oid} \otimes \mathit{Oid} \rightarrow H$;\footnote{$\mathit{hash}$ should have good
properties as a hash function, that is, it should be noninvertible and collision-resistant.  This seems impossible, since $H$ is finite
but $\mathbb{N}$ is not.  But this problem is only apparent.  In practice the
values $n \in \mathbb{N}$ used as first arguments of $\mathit{hash}$
(and also as first arguments of $\mathit{param}$)
will be random numbers $n$ such that $n \in \mathbb{N}_{< 2^{k}}$
for some fixed $k$.} and

\item $\mathit{code}$ is a function $\mathit{code}: D_2 \times A \rightarrow H$
\end{enumerate}
such that $\forall d_1 \in D_1$, $\forall n \in \mathbb{N}$, $\forall (\mathbf{A},\mathbf{B}) \in \mathit{Oid} \otimes \mathit{Oid}$,
\[\mathit{code}(f(d_1,\mathit{param}(n,(\mathbf{A},\mathbf{B}),\mathit{param}(n,(\mathbf{A},\mathbf{B}))) = 
\mathit{hash}(n,(\mathbf{A},\mathbf{B})).\]
\end{definition}

\begin{example} \label{ex:authlingo}
Let us illustrate this notion by means of a transformation
$\Lambda \mapsto \Lambda^{\alpha}$ that generates an authenticating
lingo $\Lambda^{\alpha}$ from a lingo $\Lambda$.  For concreteness,
assume that $\Lambda$ is of the form $\Lambda = (\mathbb{N}_{< 2^{n}},
\mathbb{N}_{< 2^{m}},A_{0},f_{0},g_{0})$ with $A_{0}$ a finite set,
and that there is a function $\mathit{param}_{0}: \mathbb{N}_{\leq 2^{k}}
\rightarrow A_{0}$ assigning to each random number $n \in \mathbb{N}_{< 2^{k}}$
a corresponding parameter in $A_{0}$.  This reflects the fact that, in practice,
two honest participants will share a common parameter $a_{0} \in A_{0}$
by sharing a common secret random number $n$ and using $\mathit{param}_{0}$
to generate $a_0$.
Assuming a finite data type $\mathit{Oid}$ of honest participants, 
choose a function $\mathit{hash}: \mathbb{N}_{< 2^{k}} \times  \mathit{Oid} \otimes \mathit{Oid} \rightarrow \mathbb{N}_{< 2^{j}}$,
with $j$ and $k$ big enough for $\mathit{hash}$
to be computationally infeasible to invert.
Choose a function $\iota: \mathbb{N}_{< 2^{k}} \rightarrow \mathit{Invo}(m+j)$ mapping each random number $n$ to an \emph{involution}\footnote{An involution
is a permutation $\sigma$ such that $\sigma$ is its own inverse permutation, i.e.,
$\sigma = \sigma^{-1}$.} 
$\sigma$
on $m+j$ elements. 
By abuse of notation, if $\overrightarrow{b}= b_1 , \ldots , b_{m + j}$ is a bit sequence of length $m + j$,
we let $\sigma(b_1 , \ldots , b_{m+j}) = b_{\sigma(1)} , \ldots , b_{\sigma(m+j)}$.

\noindent Then define $\Lambda^{\alpha}$ as follows:
$
\Lambda^{\alpha} = ((\mathbb{N}_{< 2^{n}},
\mathbb{N}_{< 2^{m+j}},A_{0} \times \mathit{Invo}(m+j) \times \mathbb{N}_{< 2^{j}} ,f,g), \\
\mathit{Oid},\mathbb{N}_{< 2^{j}},\mathit{param},\mathit{hash},\mathit{code})$
where:
\begin{enumerate}
\item $f(d_1 ,(a_{0}, \sigma,d)) = \sigma(f_{0}(d_1 ,a_0 );d)$,
where we view $f_{0}(d_1 ,a_0 )$ and $d$ as bit vectors of
respective lengths $m$ and $j$, $f_{0}(d_1 ,a_0 );d$ denotes
their concatenation as bit sequences. For protocol participants $\mathbf{A}$ and $\mathbf{B}$,
$d$ will be of the form $d = \mathit{hash}(n,(\mathbf{A},\mathbf{B}))$ (see (3) below).


\item $g(\overrightarrow{b},(a_{0}, \sigma,d))=g_{0}(p_{1}(\sigma(\overrightarrow{b})),a_{0})$,
where by abuse of notation identify $\overrightarrow{b}$ with a number smaller than $2^{m+j}$, and where $p_1$
(resp. $p_{2}$)
projects an $m+j$ bitvector to its first $m$ bits (resp. its last $j$ bits).

\item $\mathit{param}: \mathbb{N}_{< 2^{k}} \times  \mathit{Oid} \otimes \mathit{Oid} \rightarrow A_{0} \times \mathit{Invo}(m+j) \times \mathbb{N}_{< 2^{j}}$ is the function
$\lambda (n,(\mathbf{A},\mathbf{B}))\; (param_{0}(n),\iota(n),\mathit{hash}(n,(\mathbf{A},\mathbf{B}))$.

\item $\mathit{code}(\overrightarrow{b},(a_{0}, \sigma,d))= p_{2}(\sigma(\overrightarrow{b}))$.
Again, for protocol participants $\mathbf{A}$ and $\mathbf{B}$,
$d$ will be of the form $d = \mathit{hash}(n,(\mathbf{A},\mathbf{B}))$.
\end{enumerate}
\end{example}

\noindent The reader can check that $\Lambda^{\alpha}$ is an authenticating 
lingo.

\vspace{1.5ex}

We now give an informal description of the reasoning behind our conjecture that the transformation in Example \ref{ex:authlingo} can turn some malleable lingos into non-malleable ones. Suppose that the transformation is applied to the  lingo  $\Lambda_{\mathit{xor}}^{\sharp}$, whose malleability is shown in  Example \ref{ex:xor-sharp-mall}. In this lingo,  $f$ is defined by $ f(d_1,a) = \sigma (b_0 \oplus d_1 ; b_1 \oplus d_1 ; d)$, where the secret parameter $a$ is $b_0 ; b_1 ; \sigma$.  Note that, although the attacker sees $f(d_1,a)$, it has no ability to learn anything about $\sigma$, because from its point of view $\sigma (b_0 \oplus d_1 ; b_1 \oplus d_1 ; d)$ is random.
 
 We now suppose that the attacker applies some function to $ f(d_1,a) = \sigma (b_0 \oplus d_1 ; b_1 \oplus d_1 ; d)$ in order to produce another dialected message.  This has the effect of flipping some bits in $ f(d_1,a)$ and leaving the other bits unchanged.  That means that we can analyze an attacker’s behavior in terms of which bits it flips, without referring to the function it applies.  This makes the result of the attack easier to analyze combinatorially, at least for this example.    
 
 Note that, if  bits are flipped in $b_0 \oplus d_1 ; b_1 \oplus d_1 ; d$, the resulting string  can still pass the $f$-check if and only if  (i) no bit is flipped in $d$, and (ii) if any bit is flipped in $b_1 \oplus d_1$, then so is the bit  in $b_2 \oplus d_1$, and vice versa. We will call these constraints the  f-check constraints.  That means that the positions of the bits flipped in $\sigma (b_1 \oplus d_1 ; b_2 \oplus d_1 ; d)$ must be mapped by $\sigma$ to  a set of positions satisfying the f-check constraints.   Thus,  if we let $d'_1$ be the result of flipping bits in $b_0 \oplus d_1 ; b_1 \oplus d_1 ; d$, then $d'_1$ will be accepted by the recipient if and only if $\sigma$ maps the positions of the flipped bits to a set of positions satisfying the $f$-check constraints. We now analyze the asymptotic behavior of the probability of this occurring.
 
We set up the problem with the following parameters.  Let $s$ be a bitstring of length $2n + k$, formed by concatenating three bitstrings: (i) $c_1$ with  positions $1, \ldots, n$, (ii) $c_2$ with positions $n+1, \ldots, 2n$, and (iii)
$c_3$ with  positions $2n+1, \ldots, 2n+k$.  
 Let $Q$ be a set of $2m$ positions from $\{1, \ldots, 2n+k\}$.   Let $T = \{i_1, \ldots, i_m,\, n + i_1, \ldots, n + i_m\}$, where $i_j \in \{1, \ldots, n\}$ for each $j$, 
 so that $T$ consists of $m$ positions from $c_1$ and their corresponding mirror positions in $c_2$. Thus $|T|  = |Q| = 2m$.  
 We wish to compute the probability that $\sigma$ bijects $Q$ onto $T$.  
 We first considered the case in which the attacker wishes to flip the bits of a  specific set $T$, and analyzed the asymptotic behavior. 
 In this case the probability was easy to compute and it was also straightforward to show that it converges to 0 as $n \rightarrow \infty$ and $k$ is held constant.
  
  However, things got more complex when we considered the case in which the attacker's goal  is to pick a $Q$ such that there was \emph{some} $T$ satisfying the $f$-checkability conditions such that $\sigma(Q) = T$.   In this case, convergence to 0 was not guaranteed for constant $k$, but  investigations showed that it is possible to guarantee convergence to 0 by allowing  $n$ and $k$ to approach infinity, with $k$ increasing linearly as $n$, thus giving the defender a way  to defeat the attacker. \footnote{We are grateful to the Claude Sonnet 4.6 AI system both for the derivation of these probabilities and the analysis of their behavior.}

\section{Lingo Transformations as Formal Patterns}\label{app:FormPatts}

In a declarative programming language like Maude, a program is a \emph{theory} $T$ in a computational logic, and
a \emph{program transformation}, say $\tau$, is a mapping $\tau : T \mapsto \tau(T)$ that transforms program $T$ into program $\tau(T)$.
Such program transformations can be fruitfully understood as \emph{formal patterns} \cite{DBLP:conf/adt/Meseguer22}, that is, as \emph{formally-defined solutions
to specific software construction problems}. In  general, such transformations may transform not a single theory but several, and
may depend on additional parameters.  They are partial functions of the general form:
\[
\tau : \mathcal{C}_{1}\times \ldots \mathcal{C}_{n}\times \mathit{Params} \ni (T_{1},\ldots,T_{n},\vec{p}) \mapsto \tau(T_{1},\ldots,T_{n},\vec{p}) \in \mathcal{C}
\]
where $\mathcal{C}_{1},\ldots, \mathcal{C}_{n}$ are \emph{theory classes} specifying the \emph{assumptions} about the input theories $T_{1},\ldots,T_{n}$,
and $\mathcal{C}$ is the theory class to which $\tau(T_{1},\ldots,T_{n},\vec{p})$ is \emph{guaranteed} to belong.  We can illustrate formal patterns with several
examples from Section \ref{sec:lingos}.  For example, the $\Lambda \mapsto \Lambda^{\sharp}$ transformation
is a formal pattern where $\mathcal{C}_{1}$ is the class of equational theories $(\Sigma,E)$ specifying lingos as algebraic data types $\mathbb{T}_{\Sigma/E}$,
and $\mathcal{C}$ is the class of equational theories $(\Sigma,E)$ specifying $f$-\emph{checkable} lingos, to which $\Lambda^{\sharp}$ belongs.
Likewise, the transformations $\overrightarrow{\Lambda} \mapsto \bigoplus_{\overrightarrow{d_{0}}} 
\overrightarrow{\Lambda}$ (resp. $(\Lambda,\Lambda') \mapsto \Lambda \odot \Lambda'$) are formal patterns with
$\mathcal{C}_{1}=\ldots=\mathcal{C}_{n}$  (resp. $\mathcal{C}_{1}=\mathcal{C}_{2}$)
the class of equational theories $(\Sigma,E)$ specifying lingos, and so is
$\mathcal{C}$.  Note that the $\overrightarrow{\Lambda} \mapsto \bigoplus_{\overrightarrow{d_{0}}} 
\overrightarrow{\Lambda}$ transformation depends on the parametric vector $\overrightarrow{d_{0}}$, and that we proved in Section \ref{sec:compositions} that,
in both transformations, if the $\overrightarrow{\Lambda}$ (resp. $\Lambda'$) are $f$-\emph{checkable}, then so is $\bigoplus_{\overrightarrow{d_{0}}} 
\overrightarrow{\Lambda}$ (resp. $\Lambda \odot \Lambda'$).


\end{document}